\documentclass[onecolumn, conference]{IEEEtran}

\usepackage{amssymb}
\usepackage{amsmath}
\setlength{\topmargin}{-0.0in}
\setlength{\oddsidemargin}{0in}
\setlength{\evensidemargin}{0in}
\setlength{\textwidth}{6.5in}
\headheight=0.0in
\headsep=0.0in
\setlength{\textheight}{8.7in}

\usepackage{epic,eepic}
\usepackage{epsfig}
\usepackage{cite}
\usepackage{verbatim}
\usepackage{enumerate}
\usepackage{subfigure}
\usepackage{multicol}
\usepackage{picinpar}
\usepackage{pstricks}
\usepackage{pst-node}

\def\QED{\mbox{\rule[0pt]{1.5ex}{1.5ex}}}

\newcommand{\define}{\stackrel{\triangle}{=}}

\newtheorem{theorem}{\bf Theorem}

\newtheorem{lemma}{\bf Lemma}

\newcommand{\xH}{\mathbf{H}}
\newcommand{\xX}{\mathbf{X}}
\newcommand{\xY}{\mathbf{Y}}
\newcommand{\xZ}{\mathbf{Z}}

\newcommand{\xw}{\mathbf{w}}
\newcommand{\xV}{\mathbf{V}}
\newcommand{\xI}{\mathbf{I}}

\begin{document}

\setcounter{page}{1}
\title{ Capacity of Wireless Networks within $o(\log(\mbox{SNR}))$ - the Impact of Relays, Feedback, Cooperation and Full-Duplex Operation}
\author{\authorblockN{Viveck R. Cadambe, Syed A. Jafar}
\authorblockA{Electrical Engineering and Computer Science\\
University of California Irvine, \\
Irvine, California, 92697, USA\\
Email: {vcadambe@uci.edu, syed@uci.edu}\\ \vspace{-1cm}}}

\maketitle
\vspace{10pt}
\begin{abstract} 
Recent work has characterized the sum capacity of time-varying/frequency-selective wireless interference networks and $X$ networks within $o(\log(\mbox{SNR}))$, i.e., with an accuracy approaching 100\% at high SNR (signal to noise power ratio). In this paper, we seek similar capacity characterizations for  wireless networks with relays, feedback, full duplex operation, and transmitter/receiver cooperation through noisy channels. First, we consider a network with $S$ source nodes, $R$ relay nodes and $D$ destination nodes with random time-varying/frequency-selective channel coefficients and global channel knowledge at all nodes. We allow full-duplex operation at all nodes, as well as causal noise-free feedback of all received signals to all source and relay nodes. The sum capacity of this network is characterized as $\frac{SD}{S+D-1}\log(\mbox{SNR})+o(\log(\mbox{SNR}))$. The implication of the result is that the capacity benefits of relays, causal feedback, transmitter/receiver cooperation through physical channels and full duplex operation become a negligible fraction of the network capacity at high SNR. Some exceptions to this result are also pointed out in the paper. Second, we consider a network with $K$ full duplex nodes with an independent message from every node to every other node in the network. We find that the sum capacity of this network is bounded below by $\frac{K(K-1)}{2K-2}+o(\log(\mbox{SNR}))$ and bounded above by $\frac{K(K-1)}{2K-3}+o(\log(\mbox{SNR}))$.
\end{abstract}

\newpage
\section{Introduction}
The recent surge of interest in approximate capacity characterizations of wireless networks has lead to substantial progress on several long standing open problems. The capacity of the $2$ user interference channel and the capacity of certain relay networks have been characterized within a constant number of bits \cite{tse_etkin_wang:int1bit,diggavi_tse:deterministic_relay}. The sum capacity of the $K$-user time-varying/frequency-selective interference channel (see Figure \ref{fig:dofintx}) was characterized in \cite{cadambe_jafar:Kuserint} as 
\begin{equation} \label{eqn:dofint}C(\mbox{SNR}) = \frac{K}{2} \log(\mbox{SNR}) + o(\log(\mbox{SNR})) \end{equation}
where $\mbox{SNR}$ represents the signal to noise ratio (equivalently, the total transmit power of all nodes when the local noise power at each receiver is normalized to unity).
If the sum capacity of a network is characterized as $C(\mbox{SNR}) = d \log(\mbox{SNR}) + o(\log(\mbox{SNR}))$, then we say that the network has $d$ degrees of freedom (also known as multiplexing gain or capacity pre-log).
 Since, by definition, at high SNR the $o(\log(\mbox{SNR}))$ term is a vanishing fraction of $\log(\mbox{SNR})$, the accuracy of such a capacity characterization approaches $100\%$ as the SNR approaches infinity.  The result that the $K$ user interference network has $K/2$ degrees of freedom is interesting because it shows that at high SNR, every user in an interference network is simultaneously able to reliably communicate at a rate close to half of his individual capacity in the absence of all interferers.  The result reveals the fallacy of the conventional ``cake-cutting" view of orthogonal medium access because, essentially, it implies that everyone gets "half the cake" (where the cake represents the rate achievable by the user in the absence of interference). The achievable scheme of \cite{cadambe_jafar:Kuserint} is based on the idea of interference alignment (see \cite{jafar_shamai:dofx} and references therein). The key to interference alignment is the realization that the alignment of signal dimensions (in time, frequency, space and codes) is relative to the observer (receiver). Since every receiver sees a different picture, signals may be constructed to cast overlapping shadows at the receivers where they constitute interference while they remain distinguishable at the receivers where they are desired. 

$X$ networks are a generalization of interference networks. Unlike an interference network where each transmitter has a message for only its corresponding receiver, in an $X$ network every transmitter has an independent message for every receiver. Reference \cite{cadambe_jafar:dofx} studied the $S\times D$ $X$ network (Figure \ref{fig:dofintx}), i.e., a network with $S$ transmitters, $D$ receivers and $SD$ independent messages - one message for each transmitter-receiver pair. Using an interference alignment based achievable scheme, \cite{cadambe_jafar:dofx} characterized the sum capacity of the $S \times D$ $X$ network as
\begin{equation}
\label{eqn:dofx}
C(\mbox{SNR})=\frac{SD}{S+D-1}\log(\mbox{SNR})+o(\log(\mbox{SNR})).
\end{equation}
In other words, the $S \times D$ $X$ network has $\frac{SD}{S+D-1}$ degrees of freedom. For $S=D=K$ it is interesting to note that the degrees of freedom benefits of $X$ networks over interference networks diminish as the number of users, $K$, increases.

The results of \cite{cadambe_jafar:dofx} and \cite{cadambe_jafar:Kuserint} show that interference alignment suffices to achieve the capacity of wireless interference and $X$ networks at high SNR. However, many interesting possibilities are not explored by these scenarios. In general, a wireless network can have relay nodes. The nodes may be capable of full-duplex operation so that they can all transmit and receive simultaneously and thus co-operate to improve rates of communication. Furthermore, even when perfect global channel knowledge is already assumed, noise-free (causal) feedback of received signals may have significant benefits in a wireless network. For $K$-user interference networks, degrees of freedom with relays and transmitter/receiver cooperation through noisy channels have been explored in \cite{host-madsen_nosratinia:dofint} and an outerbound of $K/2$ is obtained. Since the results of \cite{cadambe_jafar:Kuserint} show that $K/2$ degrees of freedom can be achieved even without relays and cooperation, the conclusion is that relays and transmitter/receiver cooperation cannot increase the degrees of freedom of interference networks. In this paper, we seek a generalization of the results of \cite{host-madsen_nosratinia:dofint} from interference networks to $X$ networks. In particular, we seek a degrees of freedom characterization for fully connected wireless $X$ networks with relays, feedback, transmitter (receiver) cooperation and full duplex operation. 

\begin{figure}[!tbp]
\begin{center}\setlength{\unitlength}{0.00041667in}
\begingroup\makeatletter\ifx\SetFigFont\undefined%
\gdef\SetFigFont#1#2#3#4#5{%
  \reset@font\fontsize{#1}{#2pt}%
  \fontfamily{#3}\fontseries{#4}\fontshape{#5}%
  \selectfont}%
\fi\endgroup%
{\renewcommand{\dashlinestretch}{30}
\begin{picture}(16934,6385)(0,-10)
\put(5850,4033){\makebox(0,0)[lb]{\smash{{{\SetFigFont{6}{7.2}{\rmdefault}{\mddefault}{\updefault}$\widehat{W}_{2}$}}}}}
\put(5850,1933){\makebox(0,0)[lb]{\smash{{{\SetFigFont{6}{7.2}{\rmdefault}{\mddefault}{\updefault}$\widehat{W}_{K}$}}}}}
\put(5925,5533){\makebox(0,0)[lb]{\smash{{{\SetFigFont{6}{7.2}{\rmdefault}{\mddefault}{\updefault}$\widehat{W}_{1}$}}}}}
\put(1275,4033){\makebox(0,0)[lb]{\smash{{{\SetFigFont{6}{7.2}{\rmdefault}{\mddefault}{\updefault}$X_{2}$}}}}}
\put(1275,5533){\makebox(0,0)[lb]{\smash{{{\SetFigFont{6}{7.2}{\rmdefault}{\mddefault}{\updefault}$X_{1}$}}}}}
\put(0,4033){\makebox(0,0)[lb]{\smash{{{\SetFigFont{6}{7.2}{\rmdefault}{\mddefault}{\updefault}$W_{2}$}}}}}
\put(0,5533){\makebox(0,0)[lb]{\smash{{{\SetFigFont{6}{7.2}{\rmdefault}{\mddefault}{\updefault}$W_{1}$}}}}}
\put(15150,3733){\makebox(0,0)[lb]{\smash{{{\SetFigFont{6}{7.2}{\familydefault}{\mddefault}{\updefault}$\widehat{W}_{2,S}$}}}}}
\put(10500,5683){\makebox(0,0)[lb]{\smash{{{\SetFigFont{6}{7.2}{\rmdefault}{\mddefault}{\updefault}$X_1$}}}}}
\put(10500,4183){\makebox(0,0)[lb]{\smash{{{\SetFigFont{6}{7.2}{\rmdefault}{\mddefault}{\updefault}$X_2$}}}}}
\put(10425,2383){\makebox(0,0)[lb]{\smash{{{\SetFigFont{6}{7.2}{\rmdefault}{\mddefault}{\updefault}$X_S$}}}}}
\put(14100,1633){\makebox(0,0)[lb]{\smash{{{\SetFigFont{6}{7.2}{\rmdefault}{\mddefault}{\updefault}$Y_D$}}}}}
\put(15300,4108){\makebox(0,0)[lb]{\smash{{{\SetFigFont{6}{7.2}{\familydefault}{\mddefault}{\updefault}$~\vdots$}}}}}
\put(14100,4183){\makebox(0,0)[lb]{\smash{{{\SetFigFont{6}{7.2}{\rmdefault}{\mddefault}{\updefault}$Y_2$}}}}}
\put(14100,5683){\makebox(0,0)[lb]{\smash{{{\SetFigFont{6}{7.2}{\rmdefault}{\mddefault}{\updefault}$Y_1$}}}}}
\put(4950,1933){\makebox(0,0)[lb]{\smash{{{\SetFigFont{6}{7.2}{\rmdefault}{\mddefault}{\updefault}$Y_{K}$}}}}}
\put(4950,5533){\makebox(0,0)[lb]{\smash{{{\SetFigFont{6}{7.2}{\rmdefault}{\mddefault}{\updefault}$Y_{1}$}}}}}
\path(8625,4783)(9750,4783)(9750,3583)
	(8625,3583)(8625,4783)
\path(8625,6358)(9750,6358)(9750,5158)
	(8625,5158)(8625,6358)
\path(8550,2983)(9675,2983)(9675,1783)
	(8550,1783)(8550,2983)
\path(14925,2383)(16050,2383)(16050,1108)
	(14925,1108)(14925,2383)
\path(11100,2533)(13875,5683)
\path(13818.187,5573.126)(13875.000,5683.000)(13773.166,5612.788)
\put(4950,4033){\makebox(0,0)[lb]{\smash{{{\SetFigFont{6}{7.2}{\rmdefault}{\mddefault}{\updefault}$Y_{2}$}}}}}
\put(9075,4108){\makebox(0,0)[lb]{\smash{{{\SetFigFont{6}{7.2}{\familydefault}{\mddefault}{\updefault}$~\vdots$}}}}}
\put(9075,5683){\makebox(0,0)[lb]{\smash{{{\SetFigFont{6}{7.2}{\familydefault}{\mddefault}{\updefault}$~\vdots$}}}}}
\put(15300,1633){\makebox(0,0)[lb]{\smash{{{\SetFigFont{6}{7.2}{\familydefault}{\mddefault}{\updefault}$~\vdots$}}}}}
\put(15300,5608){\makebox(0,0)[lb]{\smash{{{\SetFigFont{6}{7.2}{\familydefault}{\mddefault}{\updefault}$~\vdots$}}}}}
\put(8925,2308){\makebox(0,0)[lb]{\smash{{{\SetFigFont{6}{7.2}{\familydefault}{\mddefault}{\updefault}$~\vdots$}}}}}
\put(15075,2008){\makebox(0,0)[lb]{\smash{{{\SetFigFont{6}{7.2}{\familydefault}{\mddefault}{\updefault}$\widehat{W}_{D,1}$}}}}}
\put(15075,1258){\makebox(0,0)[lb]{\smash{{{\SetFigFont{6}{7.2}{\familydefault}{\mddefault}{\updefault}$\widehat{W}_{D,S}$}}}}}
\put(15075,4483){\makebox(0,0)[lb]{\smash{{{\SetFigFont{6}{7.2}{\familydefault}{\mddefault}{\updefault}$\widehat{W}_{2,1}$}}}}}
\put(15075,5983){\makebox(0,0)[lb]{\smash{{{\SetFigFont{6}{7.2}{\familydefault}{\mddefault}{\updefault}$\widehat{W}_{1,1}$}}}}}
\put(15075,5233){\makebox(0,0)[lb]{\smash{{{\SetFigFont{6}{7.2}{\familydefault}{\mddefault}{\updefault}$\widehat{W}_{1,S}$}}}}}
\put(8700,2683){\makebox(0,0)[lb]{\smash{{{\SetFigFont{6}{7.2}{\familydefault}{\mddefault}{\updefault}$W_{1,S}$}}}}}
\put(8700,1933){\makebox(0,0)[lb]{\smash{{{\SetFigFont{6}{7.2}{\familydefault}{\mddefault}{\updefault}$W_{D,S}$}}}}}
\put(8850,4483){\makebox(0,0)[lb]{\smash{{{\SetFigFont{6}{7.2}{\familydefault}{\mddefault}{\updefault}$W_{1,2}$}}}}}
\put(8850,3733){\makebox(0,0)[lb]{\smash{{{\SetFigFont{6}{7.2}{\familydefault}{\mddefault}{\updefault}$W_{N,2}$}}}}}
\put(8850,6058){\makebox(0,0)[lb]{\smash{{{\SetFigFont{6}{7.2}{\familydefault}{\mddefault}{\updefault}$W_{1,1}$}}}}}
\put(8850,5308){\makebox(0,0)[lb]{\smash{{{\SetFigFont{6}{7.2}{\familydefault}{\mddefault}{\updefault}$W_{D,1}$}}}}}
\put(0,1933){\makebox(0,0)[lb]{\smash{{{\SetFigFont{6}{7.2}{\rmdefault}{\mddefault}{\updefault}$W_{K}$}}}}}
\put(1425,58){\makebox(0,0)[lb]{\smash{{{\SetFigFont{6}{7.2}{\rmdefault}{\mddefault}{\updefault}$C(\mbox{SNR}) = \frac{K}{2} \log(\mbox{SNR}) + o(\log(\mbox{SNR}))$ }}}}}
\put(10350,58){\makebox(0,0)[lb]{\smash{{{\SetFigFont{6}{7.2}{\rmdefault}{\mddefault}{\updefault}$C(\mbox{SNR}) = \frac{SD}{S+D-1} \log(\mbox{SNR}) + o(\log(\mbox{SNR}))$ }}}}}
\put(1800,658){\makebox(0,0)[lb]{\smash{{{\SetFigFont{7}{8.4}{\rmdefault}{\mddefault}{\updefault}The $K$ user interference channel}}}}}
\put(1200,1933){\makebox(0,0)[lb]{\smash{{{\SetFigFont{6}{7.2}{\rmdefault}{\mddefault}{\updefault}$X_{K}$}}}}}
\put(11175,658){\makebox(0,0)[lb]{\smash{{{\SetFigFont{7}{8.4}{\rmdefault}{\mddefault}{\updefault}The $S \times D$ $X$ channel}}}}}
\path(11100,2533)(13800,4183)
\path(13713.250,4094.828)(13800.000,4183.000)(13681.963,4146.024)
\path(1950,4108)(4650,5533)
\path(4557.877,5450.457)(4650.000,5533.000)(4529.871,5503.521)
\path(1950,2008)(4725,5533)
\path(4674.345,5420.155)(4725.000,5533.000)(4627.201,5457.268)
\path(1950,2008)(4650,4033)
\path(4572.000,3937.000)(4650.000,4033.000)(4536.000,3985.000)
\path(1950,5608)(4650,4183)
\path(4529.871,4212.479)(4650.000,4183.000)(4557.877,4265.543)
\path(1950,2008)(4650,2008)
\path(4530.000,1978.000)(4650.000,2008.000)(4530.000,2038.000)
\path(1950,4108)(4650,4108)
\path(4530.000,4078.000)(4650.000,4108.000)(4530.000,4138.000)
\put(10950,2489){\ellipse{212}{212}}
\put(13919,1708){\ellipse{212}{212}}
\put(10966,4279){\ellipse{212}{212}}
\put(10973,5798){\ellipse{212}{212}}
\put(13875,4289){\ellipse{212}{212}}
\put(13875,5789){\ellipse{212}{212}}
\put(1800,2008){\ellipse{212}{212}}
\put(1816,4129){\ellipse{212}{212}}
\put(1823,5648){\ellipse{212}{212}}
\put(4725,2008){\ellipse{212}{212}}
\put(4725,4139){\ellipse{212}{212}}
\put(4725,5639){\ellipse{212}{212}}
\path(14850,4783)(15975,4783)(15975,3583)
	(14850,3583)(14850,4783)
\path(14475,4223)(14850,4223)
\blacken\path(14726.000,4192.000)(14850.000,4223.000)(14726.000,4254.000)(14726.000,4192.000)
\path(14850,6283)(15975,6283)(15975,5083)
	(14850,5083)(14850,6283)
\path(14475,5723)(14850,5723)
\blacken\path(14726.000,5692.000)(14850.000,5723.000)(14726.000,5754.000)(14726.000,5692.000)
\path(1950,4108)(4650,2083)
\path(4536.000,2131.000)(4650.000,2083.000)(4572.000,2179.000)
\dashline{60.000}(13875,4033)(13875,2083)
\dashline{60.000}(10950,3958)(10950,2833)
\path(14550,1708)(14925,1708)
\blacken\path(14805.000,1678.000)(14925.000,1708.000)(14805.000,1738.000)(14805.000,1678.000)
\path(5400,4108)(5775,4108)
\blacken\path(5655.000,4078.000)(5775.000,4108.000)(5655.000,4138.000)(5655.000,4078.000)
\path(5475,5608)(5850,5608)
\blacken\path(5730.000,5578.000)(5850.000,5608.000)(5730.000,5638.000)(5730.000,5578.000)
\path(5400,2008)(5775,2008)
\blacken\path(5655.000,1978.000)(5775.000,2008.000)(5655.000,2038.000)(5655.000,1978.000)
\path(450,4108)(1125,4108)
\blacken\path(1005.000,4078.000)(1125.000,4108.000)(1005.000,4138.000)(1005.000,4078.000)
\path(450,5608)(1125,5608)
\blacken\path(1005.000,5578.000)(1125.000,5608.000)(1005.000,5638.000)(1005.000,5578.000)
\path(9675,2458)(10350,2458)
\blacken\path(10230.000,2428.000)(10350.000,2458.000)(10230.000,2488.000)(10230.000,2428.000)
\path(9750,4258)(10425,4258)
\blacken\path(10305.000,4228.000)(10425.000,4258.000)(10305.000,4288.000)(10305.000,4228.000)
\path(9750,5758)(10425,5758)
\blacken\path(10305.000,5728.000)(10425.000,5758.000)(10305.000,5788.000)(10305.000,5728.000)
\path(450,2008)(1125,2008)
\blacken\path(1005.000,1978.000)(1125.000,2008.000)(1005.000,2038.000)(1005.000,1978.000)
\path(11100,5758)(13875,1783)
\path(13781.710,1864.222)(13875.000,1783.000)(13830.908,1898.568)
\path(11100,4258)(13800,1708)
\path(13692.160,1768.584)(13800.000,1708.000)(13733.357,1812.205)
\path(11100,2533)(13800,1633)
\path(13676.671,1642.487)(13800.000,1633.000)(13695.645,1699.408)
\path(11100,5758)(13800,5758)
\path(13680.000,5728.000)(13800.000,5758.000)(13680.000,5788.000)
\path(11100,4258)(13800,5683)
\path(13707.877,5600.457)(13800.000,5683.000)(13679.871,5653.521)
\path(11100,5758)(13800,4333)
\path(13679.871,4362.479)(13800.000,4333.000)(13707.877,4415.543)
\path(11100,4258)(13800,4258)
\path(13680.000,4228.000)(13800.000,4258.000)(13680.000,4288.000)
\dashline{60.000}(1800,3808)(1800,2308)
\dashline{60.000}(4725,3883)(4725,2383)
\path(1950,5608)(4650,5608)
\path(4530.000,5578.000)(4650.000,5608.000)(4530.000,5638.000)
\path(1950,5608)(4725,2083)
\path(4627.201,2158.732)(4725.000,2083.000)(4674.345,2195.845)
\end{picture}
}\end{center}
\caption{Interference and $X$ channels}
\label{fig:dofintx}
\end{figure}

\section{Summary of Results}
\label{sec:discuss}

\begin{figure}[!tbp]
\begin{center}\setlength{\unitlength}{0.00052500in}
\begingroup\makeatletter\ifx\SetFigFont\undefined%
\gdef\SetFigFont#1#2#3#4#5{%
  \reset@font\fontsize{#1}{#2pt}%
  \fontfamily{#3}\fontseries{#4}\fontshape{#5}%
  \selectfont}%
\fi\endgroup%
{\renewcommand{\dashlinestretch}{30}
\begin{picture}(7247,3111)(0,-10)
\put(114,1783){\ellipse{212}{212}}
\put(3127,1783){\ellipse{212}{212}}
\put(4764,583){\ellipse{212}{212}}
\put(7133,583){\ellipse{212}{212}}
\put(4733,2983){\ellipse{212}{212}}
\put(7133,2983){\ellipse{212}{212}}
\path(5589.066,2169.360)(5483.000,2233.000)(5546.640,2126.934)
\path(5483,2233)(6383,1333)
\path(6276.934,1396.640)(6383.000,1333.000)(6319.360,1439.066)
\path(233,1783)(3008,1783)
\path(1328.000,1813.000)(1208.000,1783.000)(1328.000,1753.000)
\path(1208,1783)(1958,1783)
\path(1838.000,1753.000)(1958.000,1783.000)(1838.000,1813.000)
\path(1583,2608)(2633,2158)
\path(2510.885,2177.696)(2633.000,2158.000)(2534.520,2232.845)
\path(5333,2383)(6533,1183)
\path(5396.640,1289.066)(5333.000,1183.000)(5439.066,1246.640)
\path(5333,1183)(6533,2383)
\path(6469.360,2276.934)(6533.000,2383.000)(6426.934,2319.360)
\path(5453.000,3013.000)(5333.000,2983.000)(5453.000,2953.000)
\path(5333,2983)(6533,2983)
\path(6413.000,2953.000)(6533.000,2983.000)(6413.000,3013.000)
\path(4763.000,2263.000)(4733.000,2383.000)(4703.000,2263.000)
\path(4733,2383)(4733,883)
\path(4703.000,1003.000)(4733.000,883.000)(4763.000,1003.000)
\path(5453.000,613.000)(5333.000,583.000)(5453.000,553.000)
\path(5333,583)(6683,583)
\path(6563.000,553.000)(6683.000,583.000)(6563.000,613.000)
\path(7163.000,2413.000)(7133.000,2533.000)(7103.000,2413.000)
\path(7133,2533)(7133,1108)
\path(7103.000,1228.000)(7133.000,1108.000)(7163.000,1228.000)
\path(7133,2908)(7133,658)
\path(4733,2908)(4733,658)
\path(4883,583)(6983,583)
\path(7058,2908)(4808,658)
\path(4808,2908)(7058,658)
\path(4883,2983)(7058,2983)
\path(1511,1118)(461,1493)
\path(584.099,1480.892)(461.000,1493.000)(563.919,1424.387)
\path(1583,2608)(533,2233)
\path(635.919,2301.613)(533.000,2233.000)(656.099,2245.108)
\path(1496,1141)(2546,1516)
\path(2443.081,1447.387)(2546.000,1516.000)(2422.901,1503.892)
\put(983,808){\makebox(0,0)[lb]{{\SetFigFont{9}{10.8}{\rmdefault}{\mddefault}{\updefault}Received symbol}}}
\put(833,2758){\makebox(0,0)[lb]{{\SetFigFont{9}{10.8}{\rmdefault}{\mddefault}{\updefault}Transmitted symbol}}}
\put(8,1408){\makebox(0,0)[lb]{{\SetFigFont{8}{9.6}{\rmdefault}{\mddefault}{\updefault}$Y_1$}}}
\put(8,2008){\makebox(0,0)[lb]{{\SetFigFont{8}{9.6}{\rmdefault}{\mddefault}{\updefault}$X_1$}}}
\put(2708,1933){\makebox(0,0)[lb]{{\SetFigFont{8}{9.6}{\rmdefault}{\mddefault}{\updefault}$X_2$}}}
\put(2708,1483){\makebox(0,0)[lb]{{\SetFigFont{8}{9.6}{\rmdefault}{\mddefault}{\updefault}$Y_2$}}}
\put(5633,58){\makebox(0,0)[lb]{{\SetFigFont{9}{10.8}{\rmdefault}{\mddefault}{\updefault}(b)}}}
\put(1433,58){\makebox(0,0)[lb]{{\SetFigFont{9}{10.8}{\rmdefault}{\mddefault}{\updefault}(a)}}}
\end{picture}
}\end{center}
\caption{$K$ user full duplex network with (a) $K=2$, (b) $K=4$}
\label{fig:full_duplex}
\end{figure}
There are two main results in the paper. The first result applies to networks whose source and destination nodes are disjoint. In other words, if a node is a source node for a message, then it cannot be the destination node for any message in the network (and vice versa). In Section \ref{sec:SRDnet} we introduce the $S \times R \times D$ network - a fully connected network with $S$ source nodes, $R$ relay nodes and $D$ destination nodes with full duplex operation assumed at all nodes. Since nodes are full duplex, they can all transmit and receive and thus network allows co-operation between all nodes through noisy channels. Also included is perfect causal feedback channel of all received signals to all source and relay nodes. We show that this network has $\frac{SD}{S+D-1}$ degrees of freedom meaning that the capacity of the network can be expressed as 
$$ C(\mbox{SNR})=\frac{SD}{S+D-1}\log(\mbox{SNR})+o(\log(\mbox{SNR})) $$
Note that the above approximation is identical to the approximation in equation (\ref{eqn:dofx}). Achievability thus follows trivially from the interference alignment based achievable scheme of \cite{cadambe_jafar:dofx} between the $S$ source nodes and $D$ destination nodes. The outerbound is shown in Theorem \ref{thm:main} in Section \ref{subsec:main}. 

The second result of this  paper is for the $K$ user full duplex network considered in Section \ref{sec:fd}. There is an independent message from every node to each of the other $K-1$ nodes in the network so that there are a total of $K(K-1)$ messages. All nodes in the network function simultaneously as source, relay and destination nodes. In Theorem \ref{thm:Kuserfd}, the sum capacity of this network is bounded as 
\begin{eqnarray*} C(\mbox{SNR}) \geq \frac{K(K-1)}{2K-2} \log(\mbox{SNR}) + o(\log(\mbox{SNR})) \\ 
  C(\mbox{SNR}) \leq \frac{K(K-1)}{2K-3}\log(\mbox{SNR}) + o(\log(\mbox{SNR})) \end{eqnarray*}
The achievable scheme is based on interference alignment. Next, we discuss the implications of these results.

\subsection{Can Relays, Feedback, Co-operation and Full Duplex Operation Improve the Degrees of Freedom of a Wireless Network ?}
An interesting implication of the results is that the capacity improvements due to relays, feedback, full duplex operation and noisy co-operation become a negligible fraction of the network capacity as SNR increases. Put differently, these factors (relays etc.) do not improve the degrees of freedom of wireless networks. Thus, in most wireless networks, the search for capacity improvements of the order of $\log(\mbox{SNR})$ ends in interference alignment. While the result is obtained for a broad class of networks, it is limited by the underlying assumptions. For example, we assume that each node is equipped with only a single antenna, the channel coefficients are time-varying/frequency selective and drawn randomly from a continuous distribution whose support is bounded below by a non-zero constant (i.e., the network is fully connected), the channel knowledge is global and perfect. Some examples of scenarios where degrees of freedom can be improved are listed below.

\begin{enumerate}
\item Relays can improve the degrees of freedom if a network is not fully connected.
\item Co-operation can increase the degrees of freedom if the cost of co-operation is not accounted for (e.g., in genie-aided cognitive radio networks).
\item Full duplex operation can increase the degrees of freedom if the same node can be the source for one message and the destination for another message.
\item Feedback can improve the degrees of freedom if it is provided to a destination node, in which case it helps the decoder by providing, in effect, extra antennas that can be used to null out interference.
\end{enumerate}

In the remainder of this section, we discuss the above statements in detail.

\subsection{Relays}
\label{subsec:relay}
\begin{figure}[!tbp]
\begin{center}\setlength{\unitlength}{0.00040833in}
\begingroup\makeatletter\ifx\SetFigFont\undefined%
\gdef\SetFigFont#1#2#3#4#5{%
  \reset@font\fontsize{#1}{#2pt}%
  \fontfamily{#3}\fontseries{#4}\fontshape{#5}%
  \selectfont}%
\fi\endgroup%
{\renewcommand{\dashlinestretch}{30}
\begin{picture}(8996,5645)(0,-10)
\put(3825,5247){\ellipse{212}{212}}
\put(3825,4797){\ellipse{212}{212}}
\put(3794,1872){\ellipse{212}{212}}
\put(3794,1347){\ellipse{212}{212}}
\put(6719,4347){\ellipse{212}{212}}
\put(6750,2953){\ellipse{212}{212}}
\put(900,4347){\ellipse{212}{212}}
\put(900,2891){\ellipse{212}{212}}
\put(900,3522){\ellipse{600}{2400}}
\put(6745,3668){\ellipse{600}{2400}}
\put(3825,3147){\ellipse{750}{4950}}
\path(975,4347)(3750,5247)
\path(975,4347)(3750,4797)
\path(3636.349,4748.178)(3750.000,4797.000)(3626.745,4807.405)
\path(975,4347)(3750,1872)
\path(3640.476,1929.485)(3750.000,1872.000)(3680.413,1974.263)
\path(975,4347)(3750,1347)
\path(3646.492,1414.721)(3750.000,1347.000)(3690.538,1455.463)
\path(3825,1347)(6600,4347)
\path(6540.538,4238.537)(6600.000,4347.000)(6496.492,4279.279)
\path(3825,1347)(6600,2922)
\path(6510.446,2836.677)(6600.000,2922.000)(6480.830,2888.858)
\path(4800,297)(3900,1047)
\path(4011.392,993.225)(3900.000,1047.000)(3972.981,947.131)
\path(825,5172)(900,4797)
\path(847.049,4908.786)(900.000,4797.000)(905.883,4920.553)
\dashline{60.000}(6750,4197)(6750,3072)
\dashline{60.000}(3825,4647)(3825,2022)
\dashline{60.000}(900,4197)(900,3072)
\path(1050,2922)(3750,5247)
\path(3678.643,5145.964)(3750.000,5247.000)(3639.492,5191.430)
\path(1050,2847)(3750,1272)
\path(3631.230,1306.551)(3750.000,1272.000)(3661.463,1358.378)
\path(1050,2847)(3750,1797)
\path(3627.286,1812.533)(3750.000,1797.000)(3649.033,1868.454)
\path(975,2847)(3750,4722)
\path(3667.365,4629.959)(3750.000,4722.000)(3633.774,4679.675)
\path(3900,5322)(6675,4422)
\path(6551.598,4430.484)(6675.000,4422.000)(6570.108,4487.557)
\path(3900,5172)(6675,2997)
\path(3900,4797)(6675,2997)
\path(6557.999,3037.134)(6675.000,2997.000)(6590.650,3087.472)
\path(3900,4872)(6600,4347)
\path(6476.480,4340.456)(6600.000,4347.000)(6487.932,4399.353)
\path(3900,1872)(6675,4272)
\path(6603.861,4170.811)(6675.000,4272.000)(6564.612,4216.193)
\path(3900,1797)(6600,2922)
\path(6500.769,2848.154)(6600.000,2922.000)(6477.692,2903.538)
\path(6939,5253)(6864,4803)
\path(6854.136,4926.299)(6864.000,4803.000)(6913.320,4916.435)
\put(4125,72){\makebox(0,0)[lb]{{\SetFigFont{8}{9.6}{\familydefault}{\mddefault}{\updefault}$R$ distributed relays}}}
\put(6150,5397){\makebox(0,0)[lb]{{\SetFigFont{8}{9.6}{\familydefault}{\mddefault}{\updefault}$K$ distributed destinations}}}
\put(0,5247){\makebox(0,0)[lb]{{\SetFigFont{8}{9.6}{\familydefault}{\mddefault}{\updefault}$K$ distributed sources}}}
\end{picture}
}\end{center}
\caption{The parallel relay channel}
\label{fig:parallel_relay}
\end{figure}
It is easy to construct examples of networks that are not fully connected where the presence of relays increases the degrees of freedom. A simple example would be a 3 node network where the channel coefficient between the source and destination is zero, so that a non-zero capacity can only be achieved through the relay node. We further illustrate the significance of this result by placing it into perspective with some existing results on relay networks.

The benefits of relays on the degrees of freedom of networks have been considered in \cite{borade_zheng_gallager:afrelay}, \cite{ozgur_paulraj_etal:dofmimorelay}. \cite{borade_zheng_gallager:afrelay} considered a single $K$-antenna transmitting node communicating with a $K$-antenna receiving node through multiple orthogonal hops of distributed parallel relays.  Using a simple amplify and forward scheme they showed that the capacity of this network scales as $K\log(\mbox{SNR})+o(\log(\mbox{SNR}))$. Reference \cite{ozgur_paulraj_etal:dofmimorelay} considered a $2$ hop parallel relay network with $K$ distributed sources and $K$ distributed destinations, with the sources and destinations separated by a layer of $R$ distributed relays (Figure \ref{fig:parallel_relay}). Like the interference network, the parallel relay network has $K$ messages, one from each transmitter to its unique corresponding receiver. In this setting, the results of \cite{ozgur_paulraj_etal:dofmimorelay} indicate that if the relays are full-duplex and the number of relays $R \to \infty$, the sum capacity approaches $K \log(\mbox{SNR})$ so that the network has $K$ degrees of freedom using an achievable scheme based on amplify and forward. Reference \cite{cadambe_jafar:dofx} shows a similar result for parallel relay networks by viewing each hop as an $X$ network. The achievable scheme of \cite{cadambe_jafar:dofx} uses decode-and-forward and full-duplex relays to achieve $\frac{KR}{(K+R-1)}$ degrees of freedom over this network. Observe that by increasing $R$, we can get arbitrarily close to $K$ degrees of freedom. Thus, if we assume that source nodes are not directly connected to the destination nodes, as in Figure \ref{fig:parallel_relay}, then the results of \cite{cadambe_jafar:dofx} and \cite{ozgur_paulraj_etal:dofmimorelay} provide interesting examples where full-duplex relays increase the degrees of freedom from $K/2$ to $K$. However, the key difference between the parallel relay networks in these cases and the model in this paper is that the former network is \emph{not fully connected} since the links from the source nodes to destination nodes are absent. If the parallel relay network of Figure \ref{fig:parallel_relay} is fully connected, i.e., if there are non-zero channel coefficients from all source nodes to all the destination nodes, surprisingly, the resulting network has only $K/2$ degrees of freedom  (by the result of Theorem \ref{thm:main}). In other words, the presence of direct links from source to destination nodes in this network reduces the degrees of freedom of the network by a factor of half. By artificially imposing the half-duplex constraint on the relays, the relay-aided schemes of \cite{cadambe_jafar:dofx} (and \cite{ozgur_paulraj_etal:dofmimorelay}) can provide only an alternate means of achieving $K/2$ degrees of freedom through a 2-phase operation.

\subsection{Cognitive Cooperation}
\label{subsec:cognitive}
\begin{figure}[!tbp]
\begin{center}\setlength{\unitlength}{0.00045833in}
\begingroup\makeatletter\ifx\SetFigFont\undefined%
\gdef\SetFigFont#1#2#3#4#5{%
  \reset@font\fontsize{#1}{#2pt}%
  \fontfamily{#3}\fontseries{#4}\fontshape{#5}%
  \selectfont}%
\fi\endgroup%
{\renewcommand{\dashlinestretch}{30}
\begin{picture}(11863,3829)(0,-10)
\put(8025,2833){\ellipse{212}{212}}
\put(8100,1033){\ellipse{212}{212}}
\put(10500,2833){\ellipse{212}{212}}
\put(10500,1064){\ellipse{212}{212}}
\put(1425,2833){\ellipse{212}{212}}
\put(1500,1033){\ellipse{212}{212}}
\put(3900,2833){\ellipse{212}{212}}
\put(3900,1064){\ellipse{212}{212}}
\path(8100,2833)(10425,2833)
\path(10305.000,2803.000)(10425.000,2833.000)(10305.000,2863.000)
\path(8100,2758)(10425,1108)
\path(10309.777,1152.984)(10425.000,1108.000)(10344.502,1201.915)
\path(8175,1108)(10425,2758)
\path(10345.972,2662.844)(10425.000,2758.000)(10310.490,2711.229)
\path(8250,1033)(10350,1033)
\path(10230.000,1003.000)(10350.000,1033.000)(10230.000,1063.000)
\path(7050,2833)(7500,2833)
\path(7380.000,2803.000)(7500.000,2833.000)(7380.000,2863.000)
\path(7125,1033)(7575,1033)
\path(7455.000,1003.000)(7575.000,1033.000)(7455.000,1063.000)
\path(11025,1033)(11475,1033)
\path(11355.000,1003.000)(11475.000,1033.000)(11355.000,1063.000)
\path(11025,2833)(11475,2833)
\path(11355.000,2803.000)(11475.000,2833.000)(11355.000,2863.000)
\path(1500,2833)(3825,2833)
\path(3705.000,2803.000)(3825.000,2833.000)(3705.000,2863.000)
\path(1500,2758)(3825,1108)
\path(3709.777,1152.984)(3825.000,1108.000)(3744.502,1201.915)
\path(1575,1108)(3825,2758)
\path(3745.972,2662.844)(3825.000,2758.000)(3710.490,2711.229)
\path(1650,1033)(3750,1033)
\path(3630.000,1003.000)(3750.000,1033.000)(3630.000,1063.000)
\path(450,2833)(900,2833)
\path(780.000,2803.000)(900.000,2833.000)(780.000,2863.000)
\dashline{60.000}(675,3583)(675,2833)
\path(645.000,2953.000)(675.000,2833.000)(705.000,2953.000)
\path(525,1033)(975,1033)
\path(855.000,1003.000)(975.000,1033.000)(855.000,1063.000)
\path(780.000,913.000)(750.000,1033.000)(720.000,913.000)
\dashline{60.000}(750,1033)(750,283)
\path(4425,1033)(4875,1033)
\path(4755.000,1003.000)(4875.000,1033.000)(4755.000,1063.000)
\path(4425,2833)(4875,2833)
\path(4755.000,2803.000)(4875.000,2833.000)(4755.000,2863.000)
\dashline{60.000}(11250,3583)(11250,2833)
\path(11220.000,2953.000)(11250.000,2833.000)(11280.000,2953.000)
\path(7380.000,913.000)(7350.000,1033.000)(7320.000,913.000)
\dashline{60.000}(7350,1033)(7350,283)
\put(10650,958){\makebox(0,0)[lb]{{\SetFigFont{7}{8.4}{\rmdefault}{\mddefault}{\updefault}$Y_2$}}}
\put(6600,2758){\makebox(0,0)[lb]{{\SetFigFont{7}{8.4}{\rmdefault}{\mddefault}{\updefault}$W_{1}$}}}
\put(6675,958){\makebox(0,0)[lb]{{\SetFigFont{7}{8.4}{\rmdefault}{\mddefault}{\updefault}$W_{2}$}}}
\put(7575,2758){\makebox(0,0)[lb]{{\SetFigFont{7}{8.4}{\rmdefault}{\mddefault}{\updefault}$X_1$}}}
\put(7650,958){\makebox(0,0)[lb]{{\SetFigFont{7}{8.4}{\rmdefault}{\mddefault}{\updefault}$X_2$}}}
\put(11550,958){\makebox(0,0)[lb]{{\SetFigFont{7}{8.4}{\rmdefault}{\mddefault}{\updefault}$\hat{W}_{2}$}}}
\put(10650,2758){\makebox(0,0)[lb]{{\SetFigFont{7}{8.4}{\rmdefault}{\mddefault}{\updefault}$Y_1$}}}
\put(11550,2758){\makebox(0,0)[lb]{{\SetFigFont{7}{8.4}{\rmdefault}{\mddefault}{\updefault}$\hat{W}_{1}$}}}
\put(4050,958){\makebox(0,0)[lb]{{\SetFigFont{7}{8.4}{\rmdefault}{\mddefault}{\updefault}$Y_2$}}}
\put(450,3658){\makebox(0,0)[lb]{{\SetFigFont{7}{8.4}{\rmdefault}{\mddefault}{\updefault}$W_{2}$}}}
\put(750,3208){\makebox(0,0)[lb]{{\SetFigFont{8}{9.6}{\rmdefault}{\mddefault}{\updefault}Genie}}}
\put(0,2758){\makebox(0,0)[lb]{{\SetFigFont{7}{8.4}{\rmdefault}{\mddefault}{\updefault}$W_{1}$}}}
\put(75,958){\makebox(0,0)[lb]{{\SetFigFont{7}{8.4}{\rmdefault}{\mddefault}{\updefault}$W_{2}$}}}
\put(825,508){\makebox(0,0)[lb]{{\SetFigFont{8}{9.6}{\rmdefault}{\mddefault}{\updefault}Genie}}}
\put(675,58){\makebox(0,0)[lb]{{\SetFigFont{7}{8.4}{\rmdefault}{\mddefault}{\updefault}$W_{1}$}}}
\put(975,2758){\makebox(0,0)[lb]{{\SetFigFont{7}{8.4}{\rmdefault}{\mddefault}{\updefault}$X_1$}}}
\put(1050,958){\makebox(0,0)[lb]{{\SetFigFont{7}{8.4}{\rmdefault}{\mddefault}{\updefault}$X_2$}}}
\put(4950,958){\makebox(0,0)[lb]{{\SetFigFont{7}{8.4}{\rmdefault}{\mddefault}{\updefault}$\hat{W}_{2}$}}}
\put(4050,2758){\makebox(0,0)[lb]{{\SetFigFont{7}{8.4}{\rmdefault}{\mddefault}{\updefault}$Y_1$}}}
\put(4950,2758){\makebox(0,0)[lb]{{\SetFigFont{7}{8.4}{\rmdefault}{\mddefault}{\updefault}$\hat{W}_{1}$}}}
\put(11025,3658){\makebox(0,0)[lb]{{\SetFigFont{7}{8.4}{\rmdefault}{\mddefault}{\updefault}$W_{2}$}}}
\put(11325,3208){\makebox(0,0)[lb]{{\SetFigFont{8}{9.6}{\rmdefault}{\mddefault}{\updefault}Genie}}}
\put(7425,508){\makebox(0,0)[lb]{{\SetFigFont{8}{9.6}{\rmdefault}{\mddefault}{\updefault}Genie}}}
\put(7275,58){\makebox(0,0)[lb]{{\SetFigFont{7}{8.4}{\rmdefault}{\mddefault}{\updefault}$W_{1}$}}}
\end{picture}
}\end{center}
\caption{Two user interference channel with cognitive message sharing, each with $2$ degrees of freedom}
\label{fig:cogint}
\end{figure}
The benefits of cognitive cooperation in communication networks is currently an active area of research \cite{devroye_tarokh:cogradio},\cite{jovicic_pviswanath:cogradio},\cite{lapidoth_shamai:dofcognitiveint}, \cite{wu_svishwanath_etal}. A commonly studied model of cognitive interference networks assumes that certain nodes acquire apriori knowledge of certain messages with the aid of a genie. From the degrees of freedom perspective, cognitive message sharing can indeed increase the number of degrees of freedom of a network. For example, it is easily seen that in the $2$ user interference network with 2 messages (see Figure \ref{fig:cogint}), sharing two messages achieves the full cooperation multiplexing gain of $2$  \cite{jafar_shamai:dofx}. Other examples where cognitive message sharing increases the number of degrees of freedom of interference and $X$ networks can be found in \cite{lapidoth_shamai:dofcognitiveint}, \cite{jafar_shamai:dofx}, \cite{devroye_tarokh:cogx}. 

It is important to note that the models of \cite{jafar_shamai:dofx,lapidoth_shamai:dofcognitiveint,devroye_tarokh:cogx}, do not account for the cost of cooperation. For example, in Figure \ref{fig:cogint} (a), the cost of transmitter $2$ acquiring message $W_1$ and transmitter $1$ acquiring $W_2$ are not factored into the problem formulation. References \cite{host-madsen:dofint, host-madsen_nosratinia:dofint} consider the $2$ user interference channel where the cost-free genie is replaced by physical channels through which the transmitters and/or receivers can share information and find that the network has only $1$ degree of freedom. Thus, the degrees of freedom benefits of genie-aided cooperation disappear when the cost of cooperation is taken into account. Theorem \ref{thm:main} in this paper extends this insight to the general class of $X$ networks.  
\subsection{Full Duplex Operation}
\label{subsec:fd}
The fact that full duplex operation improves the number of degrees of freedom can be easily observed in the $2$ way $2$ user point-to-point channel which has $2$ degrees of freedom (Figure \ref{fig:full_duplex}). In a more general network with $K$ nodes and a message from every node to every other node (Figure \ref{fig:full_duplex}), full duplex operation can be shown to increase the number of degrees of freedom. For example, consider a network with $K$ nodes. 
In Theorem \ref{thm:Kuserfd}, we show that the capacity of this $K$ node full duplex network satisfies
$$ \frac{K(K-1)}{2K-2} \log(\mbox{SNR}) + o(\log(\mbox{SNR})) \leq C_{full-duplex}( \mbox{SNR}) \leq \frac{K(K-1)}{2K-3} \log(\mbox{SNR}) + o(\log(\mbox{SNR}))$$

If we assume half duplex operation, then the optimal arrangement is with $K/2$ nodes behaving as transmitters and $K/2$ nodes behaving as receivers to form a $K/2$ user $X$ network. This network has 
$\frac{K^2}{4K-4}$ degrees of freedom so that we can write 
$$ C_{half-duplex}(\mbox{SNR}) = \frac{K^2}{4K-4} \log(\mbox{SNR}) + o(\log(\mbox{SNR}))$$

Clearly full-duplex operation increases the degrees of freedom of a network with $K$ nodes.

Note that the $S \times R \times D$ network of Theorem \ref{thm:main} includes full-duplex operation in its model. The difference from the $K$ user full duplex network of section \ref{sec:fd} and the $S \times R \times D$ network of Theorem \ref{thm:main} is that in the latter, nodes that are message sources are not destination nodes.

\subsection{Feedback}
\label{subsec:fb}

\begin{figure}[!tbp]
\begin{center}\setlength{\unitlength}{0.00054167in}
\begingroup\makeatletter\ifx\SetFigFont\undefined%
\gdef\SetFigFont#1#2#3#4#5{%
  \reset@font\fontsize{#1}{#2pt}%
  \fontfamily{#3}\fontseries{#4}\fontshape{#5}%
  \selectfont}%
\fi\endgroup%
{\renewcommand{\dashlinestretch}{30}
\begin{picture}(7101,4533)(0,-10)
\put(4875,987){\ellipse{212}{212}}
\put(1469,987){\ellipse{212}{212}}
\put(3119,3237){\ellipse{212}{212}}
\blacken\path(3288.640,3868.066)(3225.000,3762.000)(3331.066,3825.640)(3288.640,3868.066)
\path(3225,3762)(3750,4287)
\path(5550,912)(6075,987)
\path(5960.449,940.331)(6075.000,987.000)(5951.963,999.728)
\blacken\path(5672.115,742.304)(5550.000,762.000)(5648.480,687.155)(5672.115,742.304)
\path(5550,762)(6075,537)
\path(900,837)(600,987)
\path(720.748,960.167)(600.000,987.000)(693.915,906.502)
\blacken\path(887.536,674.536)(975.000,762.000)(856.666,725.985)(887.536,674.536)
\path(975,762)(600,537)
\path(3322.728,3086.176)(3225.000,3162.000)(3275.549,3049.107)
\path(3225,3162)(4875,1062)
\path(4777.272,1137.824)(4875.000,1062.000)(4824.451,1174.893)
\path(3150,3762)(2700,4212)
\path(2806.066,4148.360)(2700.000,4212.000)(2763.640,4105.934)
\dashline{60.000}(4875,837)(4875,12)(1425,12)(1425,837)
\blacken\path(1455.000,717.000)(1425.000,837.000)(1395.000,717.000)(1455.000,717.000)
\dottedline{300}(2730,12)(3660,12)
\path(1695.000,1017.000)(1575.000,987.000)(1695.000,957.000)
\path(1575,987)(4800,987)
\path(4680.000,957.000)(4800.000,987.000)(4680.000,1017.000)
\dashline{60.000}(1329,1003)(729,1453)(2454,3703)(2979,3253)
\blacken\path(2868.365,3308.317)(2979.000,3253.000)(2907.413,3353.873)(2868.365,3308.317)
\dashline{60.000}(3184,3345)(3814,3810)(5614,1395)(4984,960)
\blacken\path(5065.702,1052.870)(4984.000,960.000)(5099.793,1003.496)(5065.702,1052.870)
\path(3027.000,3048.000)(3075.000,3162.000)(2979.000,3084.000)
\path(3075,3162)(1500,1062)
\path(1548.000,1176.000)(1500.000,1062.000)(1596.000,1140.000)
\put(1050,762){\makebox(0,0)[lb]{{\SetFigFont{8}{9.6}{\rmdefault}{\mddefault}{\updefault}1}}}
\put(5325,837){\makebox(0,0)[lb]{{\SetFigFont{8}{9.6}{\rmdefault}{\mddefault}{\updefault}2}}}
\put(3150,3537){\makebox(0,0)[lb]{{\SetFigFont{8}{9.6}{\rmdefault}{\mddefault}{\updefault}3}}}
\put(6150,912){\makebox(0,0)[lb]{{\SetFigFont{8}{9.6}{\rmdefault}{\mddefault}{\updefault}$\hat{W}_{2,1}$}}}
\put(6225,387){\makebox(0,0)[lb]{{\SetFigFont{8}{9.6}{\rmdefault}{\mddefault}{\updefault}$W_{3,2}$}}}
\put(0,1062){\makebox(0,0)[lb]{{\SetFigFont{8}{9.6}{\rmdefault}{\mddefault}{\updefault}$\hat{W}_{1,3}$}}}
\put(3750,4362){\makebox(0,0)[lb]{{\SetFigFont{8}{9.6}{\rmdefault}{\mddefault}{\updefault}$W_{1,3}$}}}
\put(225,312){\makebox(0,0)[lb]{{\SetFigFont{8}{9.6}{\rmdefault}{\mddefault}{\updefault}$W_{2,1}$}}}
\put(2325,4287){\makebox(0,0)[lb]{{\SetFigFont{8}{9.6}{\rmdefault}{\mddefault}{\updefault}$\hat{W}_{3,2}$}}}
\end{picture}
}\end{center}
\caption{A network where feedback increases degrees of freedom}
\label{fig:fb_helps}
\end{figure}

While we know that perfect feedback does not increase the capacity of a memory-less point-to-point channel, feedback has been shown to increase the capacity of the multiple access \cite{ozarow:fbmac},\cite{kramer:fbint} and broadcast \cite{ozarow:fbbc} channels. However, the issue addressed here is whether feedback can increase the \emph{degrees of freedom} of a wireless network. We show in Theorem \ref{thm:main} that even perfect feedback to the source-nodes and relays does not increase the degrees of freedom of a fully connected network with $S$ source nodes, $D$ destination nodes and $R$ relays. We can however construct a scenario where feedback improves the degrees of freedom of a network. To see this consider a $3$ node full duplex network with $3$ messages as in Figure \ref{fig:fb_helps}. If there is no feedback in this $3$ node network, the number of degrees of freedom of this network is upperbounded by $2$ - the degrees of freedom of a $3$ node full-duplex network discussed in the previous subsection. Now, if we include feedback in this network as shown in Figure \ref{fig:fb_helps}, we can show that $3$ degrees of freedom are achievable, with each message achieving $1$ degree of freedom. To see this, consider node $1$. Let the channel gain for the signal from node $i$ to node $j$ be equal to $H_{j,i}$ and $Z_j$ be the AWGN term at receiver $j$. This received symbol at this node is equal to $H_{1,2} X_2 + H_{1,3} X_3 + Z_1$. Also, the node obtains information of $H_{2,3} X_3 + Z_2$ through feedback from node $2$. Using the feedback, node $1$ can zero-force the interference from node $2$ and thus obtain $1$ degree of freedom for the message $W_{1,3}$. Similarly the remaining two messages can each achieve $1$ degree of freedom so that $3$ degrees of freedom are achievable over the network. Thus, feedback increases the degrees of freedom of this network. 

The reason feedback helped the network of Figure \ref{fig:fb_helps} is that it provided a decoder with an extra antenna which can be used to cancel interference. Therefore, feedback can increase the degrees of freedom if it is provided to a decoding node.

\section{System Model for an $S\times R \times D$ node $X$ Network}
\label{sec:SRDnet}
Consider an $S \times R \times D$ node network, i.e., a network with $S+R+D$ nodes where nodes $1,2,\cdots,S$ are sources, nodes $S+1,S+2,\cdots,S+R$ are relays, and nodes $S+R+1,S+R+2,\cdots,S+R+D$ are destination nodes (see Figure \ref{fig:X}). Following the definition of an $X$ network \cite{cadambe_jafar:dofx}, for all $j\in\{1,2,\cdots, S\}$ and for all $i\in\{S+R+1,S+R+2,\cdots,S+R+D\} $, there is an independent message $W_{i,j}$ to be communicated from source node $j$ to destination node $i$.

Full duplex operation is assumed so that all nodes are capable of transmitting and receiving simultaneously. The input and output signals of the $S \times R \times D$ node network are related as:
\begin{eqnarray}\label{eq:defX}
Y_i(n)&=&\sum_{j=1}^{S+R+D}H_{i,j}(n)X_{j}(n)+Z_i(n), ~~~i\in\{1,2,\cdots,S+R+D\}, n\in\mathbb{N}
\end{eqnarray}
where, at the $n^{th}$ discrete time slot, $X_{j}(n)$ is the symbol transmitted by node $j$, $Y_i(n)$ is the symbol received by node $i$, $H_{i,j}(n)$ is the channel from node $j$ to node $i$ and $Z_i(n)$ is the zero mean unit variance additive white Gaussian noise (AWGN) at node $i$. We use the following notation,
\begin{eqnarray}
X_i^n &\triangleq& \{X_i(1),X_i(2),\cdots, X_i(n)\}
\end{eqnarray}
Similar notation is used for output signals and the additive noise terms as well.

The channel coefficients $H_{i,j}(n)$, $\forall i,j\in\{1,2,\cdots,S+D+R\}$ are known \textit{apriori}\footnote{Thus, we also show that non-causal channel knowledge does not increase the degrees of freedom.} to all nodes. We assume the channel coefficients are time-varying/frequency-selective with values drawn randomly from a continuous distribution with support bounded below by a non-zero constant. Thus, all channel coefficients take non-zero values and the network is fully-connected. The AWGN terms $Z_i(n)$ have unit variance and are independent identically distributed (i.i.d.) in time and across nodes.
\begin{figure}[!tbp]
\begin{center}\setlength{\unitlength}{0.00050833in}
\begingroup\makeatletter\ifx\SetFigFont\undefined%
\gdef\SetFigFont#1#2#3#4#5{%
  \reset@font\fontsize{#1}{#2pt}%
  \fontfamily{#3}\fontseries{#4}\fontshape{#5}%
  \selectfont}%
\fi\endgroup%
{\renewcommand{\dashlinestretch}{30}
\begin{picture}(8846,6405)(0,-10)
\path(6825,5250)(6225,4650)
\path(6288.640,4756.066)(6225.000,4650.000)(6331.066,4713.640)
\path(6900,1725)(6255,1972)
\path(6377.793,1957.102)(6255.000,1972.000)(6356.335,1901.070)
\path(975,300)(1350,1725)
\path(1348.473,1601.316)(1350.000,1725.000)(1290.449,1616.586)
\path(3680.527,1618.026)(3626.000,1507.000)(3726.360,1579.304)
\path(3626,1507)(6120,4459)
\path(6065.473,4347.974)(6120.000,4459.000)(6019.640,4386.696)
\path(5993.075,1871.342)(6105.000,1924.000)(5981.463,1930.208)
\path(6105,1924)(3611,1432)
\path(3722.925,1484.658)(3611.000,1432.000)(3734.537,1425.792)
\path(3723.693,5280.351)(3600.000,5280.000)(3709.306,5222.101)
\path(3600,5280)(6090,4665)
\path(5966.307,4664.649)(6090.000,4665.000)(5980.694,4722.899)
\path(1413.803,4420.951)(1384.000,4541.000)(1353.803,4421.049)
\path(1384,4541)(1380,2107)
\path(1350.197,2227.049)(1380.000,2107.000)(1410.197,2226.951)
\path(1546.538,4435.809)(1455.000,4519.000)(1496.621,4402.519)
\path(1455,4519)(3431,1556)
\path(3339.462,1639.191)(3431.000,1556.000)(3389.379,1672.481)
\path(1642.426,4504.791)(1523.000,4537.000)(1613.219,4452.380)
\path(1523,4537)(6045,2017)
\path(5925.574,2049.209)(6045.000,2017.000)(5954.781,2101.620)
\path(1691.076,4577.655)(1568.000,4590.000)(1670.787,4521.190)
\path(1568,4590)(4699,3465)
\path(4575.924,3477.345)(4699.000,3465.000)(4596.213,3533.810)
\path(611,5149)(1361,4699)
\path(1242.666,4735.015)(1361.000,4699.000)(1273.536,4786.464)
\dashline{60.000}(6135,4680)(6113,6094)(1463,6094)(1463,4744)
\path(1433.000,4864.000)(1463.000,4744.000)(1493.000,4864.000)
\path(1695.391,2053.391)(1575.000,2025.000)(1694.588,1993.396)
\path(1575,2025)(6056,1965)
\path(5935.609,1936.609)(6056.000,1965.000)(5936.412,1996.604)
\dashline{60.000}(6150,1920)(6150,825)(1500,825)(1481,1909)
\path(1513.098,1789.544)(1481.000,1909.000)(1453.108,1788.493)
\path(3551.858,1264.218)(3521.000,1384.000)(3491.859,1263.788)
\dashline{60.000}(3521,1384)(3525,825)
\dashline{60.000}(3525,6150)(3525,5325)
\path(3495.000,5445.000)(3525.000,5325.000)(3555.000,5445.000)
\put(375,75){\makebox(0,0)[lb]{\smash{{{\SetFigFont{10}{12.0}{\rmdefault}{\mddefault}{\updefault}$S$ distributed sources}}}}}
\put(4050,150){\makebox(0,0)[lb]{\smash{{{\SetFigFont{10}{12.0}{\rmdefault}{\mddefault}{\updefault}$R$ relays}}}}}
\put(6000,0){\makebox(0,0)[lb]{\smash{{{\SetFigFont{10}{12.0}{\rmdefault}{\mddefault}{\updefault}$D$ distributed destinations}}}}}
\put(4050,6225){\makebox(0,0)[lb]{\smash{{{\SetFigFont{10}{12.0}{\rmdefault}{\mddefault}{\updefault}Perfect Feedback}}}}}
\put(300,5250){\makebox(0,0)[lb]{\smash{{{\SetFigFont{10}{12.0}{\rmdefault}{\mddefault}{\updefault}Node $1$}}}}}
\put(6675,5325){\makebox(0,0)[lb]{\smash{{{\SetFigFont{10}{12.0}{\rmdefault}{\mddefault}{\updefault}Node $S+R+1$}}}}}
\put(6750,1500){\makebox(0,0)[lb]{\smash{{{\SetFigFont{10}{12.0}{\rmdefault}{\mddefault}{\updefault}Node $S+R+D$}}}}}
\put(0,1275){\makebox(0,0)[lb]{\smash{{{\SetFigFont{10}{12.0}{\rmdefault}{\mddefault}{\updefault}Node $S$}}}}}
\path(675,1575)(1350,1950)
\path(1259.670,1865.498)(1350.000,1950.000)(1230.532,1917.948)
\put(6150,4575){\ellipse{212}{212}}
\put(6150,2025){\ellipse{212}{212}}
\put(1455,4633){\ellipse{212}{212}}
\put(4800,3450){\ellipse{212}{212}}
\put(3525,5206){\ellipse{212}{212}}
\put(1519,3417){\ellipse{900}{3600}}
\put(3825,3450){\ellipse{2400}{4500}}
\put(6075,3300){\ellipse{750}{3900}}
\put(1456,2025){\ellipse{220}{220}}
\put(3525,1500){\ellipse{212}{212}}
\path(6204.869,4341.723)(6176.000,4462.000)(6144.871,4342.288)
\path(6176,4462)(6154,2126)
\path(6125.131,2246.277)(6154.000,2126.000)(6185.129,2245.712)
\path(4058.799,1109.018)(3975.000,1200.000)(4006.125,1080.287)
\path(3975,1200)(4425,375)
\path(6366.408,1695.645)(6300.000,1800.000)(6309.487,1676.671)
\path(6300,1800)(6825,225)
\path(1624.744,2180.057)(1534.000,2096.000)(1653.624,2127.465)
\path(1534,2096)(6041,4571)
\path(5950.256,4486.943)(6041.000,4571.000)(5921.376,4539.535)
\path(4958.478,3599.234)(4886.000,3499.000)(4997.120,3553.335)
\path(4886,3499)(6075,4500)
\path(6002.522,4399.766)(6075.000,4500.000)(5963.880,4445.665)
\path(4977.467,3305.714)(4875.000,3375.000)(4932.807,3265.646)
\path(4875,3375)(6053,2062)
\path(5950.533,2131.286)(6053.000,2062.000)(5995.193,2171.354)
\path(4686.840,3257.340)(4725.000,3375.000)(4635.960,3289.140)
\path(4725,3375)(3600,1575)
\path(3638.160,1692.660)(3600.000,1575.000)(3689.040,1660.860)
\path(3522.163,4984.297)(3491.000,5104.000)(3462.166,4983.714)
\path(3491,5104)(3525,1601)
\path(3493.837,1720.703)(3525.000,1601.000)(3553.834,1721.286)
\path(1679.692,1956.503)(1556.000,1957.000)(1664.906,1898.353)
\path(1556,1957)(3428,1481)
\path(3304.308,1481.497)(3428.000,1481.000)(3319.094,1539.647)
\path(1695.198,4679.195)(1575.000,4650.000)(1694.796,4619.196)
\path(1575,4650)(6049,4620)
\path(5928.802,4590.805)(6049.000,4620.000)(5929.204,4650.804)
\path(1608.423,4784.535)(1500.000,4725.000)(1623.684,4726.508)
\path(1500,4725)(3424,5231)
\path(3315.577,5171.465)(3424.000,5231.000)(3300.316,5229.492)
\path(3383.730,5043.046)(3424.000,5160.000)(3333.431,5075.755)
\path(3424,5160)(1451,2126)
\path(1491.270,2242.954)(1451.000,2126.000)(1541.569,2210.245)
\path(3650.641,5021.560)(3555.000,5100.000)(3602.477,4985.781)
\path(3555,5100)(4725,3525)
\path(4629.359,3603.440)(4725.000,3525.000)(4677.523,3639.219)
\path(3722.191,5075.101)(3623.000,5149.000)(3675.746,5037.117)
\path(3623,5149)(6101,2119)
\path(6001.809,2192.899)(6101.000,2119.000)(6048.254,2230.883)
\end{picture}
}\end{center}
\caption{The $S \times R \times D$ network}
\label{fig:X}
\end{figure}

Perfect (noise-free) and causal feedback of all received signals is available to all source and relay nodes, but not to the destination nodes. For codewords spanning $N$ channel uses, the encoding functions are as follows,
\begin{eqnarray*}
X_i(n)=
\left\{
\begin{array}{ll}
f_{i,n}\left(W_{S+R+1,i}, W_{S+R+2,i},\cdots, W_{S+R+D,i},Y_1^{n-1}, Y_2^{n-1},\cdots, Y_{S+R+D}^{n-1}\right), & i\in\{1,2,\cdots, S\}\\
f_{i,n}\left(Y_1^{n-1}, Y_2^{n-1},\cdots, Y_{S+R+D}^{n-1}\right), & i\in\{S+1,S+2,\cdots, S+R\}\\
f_{i,n}\left(Y_i^{n-1}\right), & i\in\{S+R+1,\cdots,S+R+D\}
\end{array}
\right.
\end{eqnarray*}
for $n=1,2,\cdots, N$. 
In other words, the signal transmitted from a source nodes at time $n$ is completely determined by all the messages originating at that source node and the received signals of \emph{all} the nodes upto time $n-1$ (causality condition). The signal transmitted by a destination node at time $n$ can only depend on all the received signals at \emph{that} node upto time $n-1$. This is because the destination nodes do not receive feedback of other nodes' received signals. The signal transmitted from a relay node can only depend on the received signals of \emph{all} the nodes upto time $n-1$.

The decoding functions are as follows,
\begin{eqnarray*}
\hat{W}_{i,j}&=& g_{i,j}\left(Y_i^{N}\right),  i\in\{S+R+1,S+R+2,\cdots, S+R+D\}, j\in\{1,2,\cdots,S\}
\end{eqnarray*}
Thus, a destination node can only use its own received signal to decode all its desired messages. The probability of error is the probability that there is at least one message $W_{i,j}$ that is not decoded correctly, i.e. $\hat{W}_{i,j}\neq W_{i,j}$ for some $(i,j)$.

The total power across all transmitters is assumed to be $\rho$ per channel use. We denote the size of the message set by $|W_{i,j}(\rho)|$. Let $ R_{i,j}(\rho) = \frac{\log| W_{i,j}(\rho)|}{N} $ denote the rate of the codeword encoding the message $W_{i,j}$, where the codewords span $N$ slots. A rate-matrix$[(R_{i,j}(\rho))]$ is said to be \emph{achievable} if messages $W_{i,j}$ can be encoded at rates $R_{i,j}(\rho)$ so that the probability of error can be made arbitrarily small simultaneously for all messages by choosing appropriately long $N$.

Let $C(\rho)$ represent the capacity region of the $S\times R \times D$ node network, i.e., it represents the set of all achievable rate-matrices $[(R_{ji}(\rho))]$. The degrees of freedom region of the $S\times R\times D$ node network is defined as
\begin{eqnarray*}
\mathcal{D} = \Bigg\{ [(d_{i,j})] \in \mathbb{R}_{+}^{SD}: \forall [(\alpha_{i,j})] \in \mathbb{R}_+^{SD} & &\\
		\displaystyle\sum_{i=S+R+1}^{S+R+D} \sum_{j=1}^{S} \alpha_{i,j} d_{i,j} & \leq & \lim\sup_{\rho \to \infty} \left[ \sup_{[(R_{i,j}(\rho))] \in C(\rho)} \displaystyle\sum_{i=S+R+1}^{S+R+D} \sum_{j=1}^{S} \left(\alpha_{i,j} R_{i,j}(\rho) \right)\frac{1}{\log(\rho)} \right] \Bigg\} 
\end{eqnarray*}
\section{Degrees of Freedom of the $S\times R\times D$ node $X$ network}
\label{subsec:main}

\begin{theorem}
\label{thm:main}
Let 
\begin{eqnarray*} \mathcal{D}^{out} \define \bigg\{ [(d_{i,j})] : \forall (u,v) \in \{1,2 \cdots S\} \times \{S+R+1,S+R+2, \cdots, S+R+D\}\\
 \displaystyle\sum_{q=S+R+1}^{S+R+D} d_{q,u} + \displaystyle\sum_{p=1}^{S} d_{v,p} - d_{v,u} & \leq 1 & \bigg\} \end{eqnarray*}
Then $\mathcal{D} \subseteq \mathcal{D}^{out}$ where $\mathcal{D}$ represents the degrees of freedom region of the $S\times R\times D$ node $X$ network.
Furthermore, the total number of degrees of freedom of the $S \times R \times D$ network can be upperbounded as follows
\begin{eqnarray*}
\max_{[(d_{i,j})] \in \mathcal{D}} \sum_{j=1}^{S} \sum_{i=S+R+1}^{S+R+D} d_{i,j} &\leq& \frac{SD}{S+D-1}
\end{eqnarray*}
Equivalently, the sum capacity $C(\rho)$ of this network can be bounded as 
\begin{equation*}
C(\rho) \leq \frac{SD}{S+D-1} \log(\rho) + o(\log(\rho))
\end{equation*}
\end{theorem}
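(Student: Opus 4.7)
The plan is to first establish the $SD$ per-pair inequalities defining $\mathcal{D}^{out}$, and then derive the sum-DoF bound $\frac{SD}{S+D-1}$ by summing these inequalities; the capacity bound $C(\rho)\leq\frac{SD}{S+D-1}\log(\rho)+o(\log(\rho))$ is then immediate from the definition of the DoF region.

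For each pair $(u,v)$, the left-hand side $\sum_q d_{q,u}+\sum_p d_{v,p}-d_{v,u}$ counts the sum of DoF over the $S+D-1$ ``L-shaped'' messages $\mathcal{S}=\{W_{q,u}\}_q\cup\{W_{v,p}\}_p$, with the shared message $W_{v,u}$ counted once (hence the $-d_{v,u}$ correction). The natural approach is a genie-based converse: first provide every message $W_{i,j}$ with $(i,j)\notin\mathcal{S}$ to every node for free (which only enlarges the achievable region), then enhance the network further with carefully chosen side information so that a single ``effective'' receiver can be shown to decode every message in $\mathcal{S}$. Fano's inequality applied to this effective receiver, combined with the data processing inequality through the two scalar funnels $X_u^N$ (the only channel input encoding $\{W_{q,u}\}_q$) and $Y_v^N$ (the only channel output from which $\{W_{v,p}\}_p$ is decoded), should bound the total mutual information by $N\log(\rho)+o(N\log(\rho))$, which delivers the per-pair inequality.

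Summing the per-pair bound over all $SD$ choices of $(u,v)\in\{1,\ldots,S\}\times\{S+R+1,\ldots,S+R+D\}$ gives
\[ \sum_{u,v}\Bigl[\sum_q d_{q,u}+\sum_p d_{v,p}-d_{v,u}\Bigr] = D\sum_{u,q}d_{q,u}+S\sum_{v,p}d_{v,p}-\sum_{u,v}d_{v,u} = (S+D-1)\,d_\Sigma \leq SD, \]
where $d_\Sigma\define\sum_{u,v}d_{v,u}$ denotes the total sum DoF: the first sum contributes a factor of $D$ (one copy of each $d_{q,u}$ per destination index $v$), the second a factor of $S$, and the subtracted term contributes exactly $d_\Sigma$. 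Hence $d_\Sigma\leq\frac{SD}{S+D-1}$, which is equivalent to the stated bound on $C(\rho)$.

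The hard part will be the genie construction. Because the network combines full-duplex operation, relay nodes, and noise-free feedback to all sources and relays, the transmissions $X_j^n$ for $j\neq u$ can depend on $\{W_{q,u}\}_q$ through the causal feedback loop, and relay transmissions further blur the boundary between ``source-side'' and ``receiver-side'' signals. A naive enhancement---say, revealing $X_p^N$ for $p\neq u$ directly to destination $v$---would let $v$ trivially decode $W_{v,p}$ and make those rates unbounded in the enhanced network, destroying the converse. I expect the correct enhancement to require an inductive argument over time slots that carefully tracks how information about $\mathcal{S}$ propagates through the cooperation-capable part of the network, together with a cautious choice of what to reveal so that the shared message $W_{v,u}$ is counted exactly once in the two single-antenna bottleneck constraints (at $X_u$ on the transmit side and at $Y_v$ on the receive side) that together yield the $-d_{v,u}$ correction.
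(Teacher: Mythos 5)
Your outer framework matches the paper exactly: the theorem is reduced to the $SD$ per-pair ``L-shaped'' inequalities, and your summation step (each $d_{q,u}$ appearing $D$ times, each $d_{v,p}$ appearing $S$ times, the overlap subtracted once, giving $(S+D-1)d_\Sigma \leq SD$) is precisely the paper's final step and is correct. However, the per-pair inequality itself --- which is the entire substance of the theorem --- is not proved in your proposal; you describe what a genie argument \emph{should} accomplish and explicitly defer the construction (``the hard part will be the genie construction,'' ``I expect the correct enhancement to require an inductive argument''). That is a genuine gap, not a stylistic omission, because the difficulty you correctly identify (feedback and relays let every transmit signal depend on every message, so naive enhancements blow up the rates) is exactly where the proof lives.

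For comparison, here is how the paper closes that gap. First, all messages outside $\mathcal{S}$ are set to null, and then sources $2,\dots,S$ together with all $R$ relays are merged into a single $(S+R-1)$-antenna transmitter, while destinations $S+R+1,\dots,S+R+D-1$ are merged into a single $(D-1)$-antenna receiver; this collapses the problem to a $4$-node $X$ network in which only node $1$ carries the messages $\{\overline{W}_{3,1},\overline{W}_{4,1}\}$ and node $2$ carries $\overline{W}_{4,2}$. The genie then gives node $3$ the messages $\overline{W}_{4,1},\overline{W}_{4,2}$ together with the \emph{noisy} observations $\overline{U}_i(n)=\overline{H}_{i,1}(n)\overline{X}_1(n)+\overline{Z}_i(n)$ for all $i$ --- crucially not $\overline{X}_1$ itself. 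An induction over time slots (Lemma 2 of the paper) shows this side information determines every received signal in the network, so Fano applies; the resulting entropy terms are controlled by two facts: (i) given $\overline{U}_4(n)$, the remaining observations $\overline{U}_i(n)$ of the same scalar $\overline{X}_1(n)$ carry only $O(1)$ additional entropy (an MMSE-type bound requiring only $\overline{H}_{4,1}\neq 0$), and (ii) the term $H(\overline{Y}_4^N\mid\overline{W}_{4,1},\overline{W}_{4,2})$ appears with a plus sign in the bound on $\overline{R}_{3,1}$ and a minus sign in the Fano bound on $\overline{R}_{4,1}+\overline{R}_{4,2}$, so it cancels upon addition --- this cancellation is the precise mechanism behind the $-d_{v,u}$ correction, which your ``two scalar funnels'' heuristic gestures at but does not implement. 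Without supplying this construction (or an equivalent one), your proposal establishes only the arithmetic reduction from the region bound to the sum bound, not the region bound itself.
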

\begin{proof}
\begin{figure}[!tbp]
\begin{center}\setlength{\unitlength}{0.00050000in}
\begingroup\makeatletter\ifx\SetFigFont\undefined%
\gdef\SetFigFont#1#2#3#4#5{%
  \reset@font\fontsize{#1}{#2pt}%
  \fontfamily{#3}\fontseries{#4}\fontshape{#5}%
  \selectfont}%
\fi\endgroup%
{\renewcommand{\dashlinestretch}{30}
\begin{picture}(7800,4962)(0,-10)
\put(6900,987){\ellipse{150}{150}}
\put(2775,3987){\ellipse{150}{150}}
\put(2700,1287){\ellipse{150}{150}}
\put(2700,612){\ellipse{150}{150}}
\put(6900,4287){\ellipse{150}{150}}
\put(6900,3612){\ellipse{150}{150}}
\path(2805.000,3792.000)(2775.000,3912.000)(2745.000,3792.000)
\path(2775,3912)(2775,1437)
\path(2745.000,1557.000)(2775.000,1437.000)(2805.000,1557.000)
\path(6870.000,1257.000)(6900.000,1137.000)(6930.000,1257.000)
\path(6900,1137)(6900,3462)
\path(6930.000,3342.000)(6900.000,3462.000)(6870.000,3342.000)
\dashline{60.000}(2775,4737)(1350,4737)(1350,987)(2550,987)
\blacken\path(2430.000,957.000)(2550.000,987.000)(2430.000,1017.000)(2430.000,957.000)
\dashline{60.000}(6900,4437)(6900,4737)(2775,4737)(2775,4137)
\blacken\path(2745.000,4257.000)(2775.000,4137.000)(2805.000,4257.000)(2745.000,4257.000)
\path(2550,1437)(2850,1437)(2850,462)
	(2550,462)(2550,1437)
\path(6750,4437)(7050,4437)(7050,3462)
	(6750,3462)(6750,4437)
\dottedline{45}(2700,1137)(2700,687)
\dottedline{45}(6900,4137)(6900,3687)
\dashline{60.000}(6900,837)(6900,12)(1875,12)
	(1875,3912)(2625,3912)
\blacken\path(2505.000,3882.000)(2625.000,3912.000)(2505.000,3942.000)(2505.000,3882.000)
\dashline{60.000}(2700,12)(2700,387)
\blacken\path(2730.000,267.000)(2700.000,387.000)(2670.000,267.000)(2730.000,267.000)
\path(7125,3987)(7725,3987)
\path(7605.000,3957.000)(7725.000,3987.000)(7605.000,4017.000)
\path(7050,987)(7650,987)
\path(7530.000,957.000)(7650.000,987.000)(7530.000,1017.000)
\path(975,687)(2475,687)
\blacken\path(2355.000,657.000)(2475.000,687.000)(2355.000,717.000)(2355.000,657.000)
\path(1125,3987)(2625,3987)
\blacken\path(2505.000,3957.000)(2625.000,3987.000)(2505.000,4017.000)(2505.000,3957.000)
\path(2970.000,4017.000)(2850.000,3987.000)(2970.000,3957.000)
\path(2850,3987)(6750,3987)
\path(6630.000,3957.000)(6750.000,3987.000)(6630.000,4017.000)
\path(2965.004,3866.458)(2850.000,3912.000)(2930.043,3817.697)
\path(2850,3912)(6825,1062)
\path(6709.996,1107.542)(6825.000,1062.000)(6744.957,1156.303)
\path(6705.000,957.000)(6825.000,987.000)(6705.000,1017.000)
\path(6825,987)(2850,987)
\path(2970.000,1017.000)(2850.000,987.000)(2970.000,957.000)
\path(2929.186,1157.024)(2850.000,1062.000)(2964.587,1108.580)
\path(2850,1062)(6750,3912)
\path(6670.814,3816.976)(6750.000,3912.000)(6635.413,3865.420)
\put(7275,4287){\makebox(0,0)[lb]{{\SetFigFont{8}{9.6}{\rmdefault}{\mddefault}{\updefault}3}}}
\put(7125,687){\makebox(0,0)[lb]{{\SetFigFont{8}{9.6}{\rmdefault}{\mddefault}{\updefault}4}}}
\put(2550,4137){\makebox(0,0)[lb]{{\SetFigFont{8}{9.6}{\rmdefault}{\mddefault}{\updefault}1}}}
\put(2325,1137){\makebox(0,0)[lb]{{\SetFigFont{8}{9.6}{\rmdefault}{\mddefault}{\updefault}2}}}
\put(3675,4812){\makebox(0,0)[lb]{{\SetFigFont{7}{8.4}{\rmdefault}{\mddefault}{\updefault}Feedback}}}
\put(4500,87){\makebox(0,0)[lb]{{\SetFigFont{7}{8.4}{\rmdefault}{\mddefault}{\updefault}Feedback}}}
\put(7800,3912){\makebox(0,0)[lb]{{\SetFigFont{7}{8.4}{\rmdefault}{\mddefault}{\updefault}$\hat{\overline{W}}_{3,1}$}}}
\put(7725,912){\makebox(0,0)[lb]{{\SetFigFont{7}{8.4}{\rmdefault}{\mddefault}{\updefault}$\hat{\overline{W}}_{4,1}, \hat{\overline{W}}_{4,2}$}}}
\put(0,3912){\makebox(0,0)[lb]{{\SetFigFont{7}{8.4}{\rmdefault}{\mddefault}{\updefault}$\overline{W}_{3,1}, \overline{W}_{4,1}$}}}
\put(375,612){\makebox(0,0)[lb]{{\SetFigFont{7}{8.4}{\rmdefault}{\mddefault}{\updefault}$\overline{W}_{4,2}$}}}
\end{picture}
}\end{center}
\caption{$4$ user $X$ network}
\label{fig:4userX}
\end{figure}
To prove the theorem, all we need to show is that for any $(p,q) \in \{1,2, \ldots, S\} \times \{S+R+1, S+R+2, \ldots, S+R+D\}$
$$\sum_{i=S+R+1}^{S+R+D}d_{i,p}+\sum_{j=1}^{S}d_{q,j}-d_{q,p} \leq 1$$
In other words, for all messages that either originate at node $p$ or are intended for node $q$, the total number of degrees of freedom cannot be more than one.  For convenience, we will show the inequality for $(q,p) = (1,S+R+D)$. By symmetry, the inequality extends to all desired values of $p,q$. We therefore intend to show that 
$$\sum_{i=S+R+1}^{S+R+D}d_{i,1}+\sum_{j=1}^{S}d_{S+R+D,j}-d_{S+R+D,1} \leq 1$$
To show this, we first eliminate all the messages that are not associated with either source node $1$ or destination node $S+R+D$, i.e., we set $W_{i,j} = \phi, \left(i-(S+R+D)\right)(j-1) \neq 0$. Since we are only seeking an outerbound on the rates of a subset of messages, it is important to note that eliminating a message can not hurt the rates of the remaining messages \cite{jafar_shamai:dofx}. Now, we transform the original $S\times R\times D$ node network with single antenna nodes into a $2\times 0 \times 2$ node network, i.e., an $X$ network with $2$ source nodes, zero relay nodes and $2$ destination nodes where one source and one destination have multiple antennas (see Figure \ref{fig:4userX}). This is done by allowing full cooperation between the $S-1$ source nodes $2,\cdots, S$ and the $R$ relay nodes $S+1,S+2,\cdots, S+R$ so that they effectively become one transmitter with $S+R-1$ antennas. Similarly, destination nodes $S+R+1, S+R+2, \cdots, S+R+D-1$ are also allowed to perfectly cooperate so that they form one receiver with $D-1$ antennas. Again, note that allowing the nodes to cooperate cannot reduce the degrees of freedom region and therefore does not contradict our outerbound argument. We represent the resulting $4$ node $X$ network (Figure \ref{fig:4userX}) by the following input-output equations.

\begin{eqnarray}\label{eqn:defXtrans}
\overline{Y}_i(n)&=&\sum_{j=1}^4{\overline{H}}_{i,j}(n)\overline{X}_j(n)+\overline{Z}_i(n), ~~~~~i\in\{1,2,3,4\}
\end{eqnarray}
where 
\begin{eqnarray*}
\overline{Y}_1(n) & = & Y_1(n)\\
\overline{Y}_2(n) & = & \left[Y_2(n)~~Y_3(n)~~\cdots ~~Y_{S+R}(n)\right]^T\\
\overline{Y}_3(n) & = & \left[Y_{S+R+1}(n)~~Y_{S+R+2}(n)~~\cdots ~~Y_{S+R+D-1}(n)\right]\\
\overline{Y}_4(n) & = & Y_{S+R+D}(n)
\end{eqnarray*}
Thus, nodes $2$ and $3$ act as multiple antenna nodes with $S+R-1$ and $D-1$ antennas respectively. $\overline{X}_i(n), \overline{Z}_i(n)$ are also defined in a corresponding manner for $i\in\{1,2,3,4\}$. The definition of the channel coefficients $\overline{H}_{i,j}(n)$ is clear from equations (\ref{eq:defX}) and (\ref{eqn:defXtrans}), and from Figures \ref{fig:X} and \ref{fig:4userX}.
Multiple messages that have the same source and the same destination are combined in the $4$ node $X$ network as follows:
\begin{eqnarray}
\label{eqn:Xmessages1}
\overline{W}_{3,1}&=& \left[ W_{S+R+1,1} ~~W_{S+R+2,1} ~~\cdots ~~ W_{S+R+D-1,1}\right]\\
\overline{W}_{3,2}&=& \phi\\
\overline{W}_{4,1} & = &  W_{S+R+D,1}\\
\overline{W}_{4,2} & = & \left[W_{S+R+D,2} ~~ W_{S+R+D,3} ~~ \cdots ~~ W_{S+R+D,S}\right]
\label{eqn:Xmessages2}
\end{eqnarray}

Over this $X$ network, the encoding functions are as follows
\begin{eqnarray}
\label{eqn:Xencode1}
\overline{X}_1(n)&=&\overline{f}_{1,n}\left(\overline{W}_{3,1},\overline{W}_{4,1},\overline{Y}_1^{n-1},\overline{Y}_2^{n-1},\overline{Y}_3^{n-1},\overline{Y}_4^{n-1}\right)\\
\overline{X}_2(n)&=&\overline{f}_{2,n}\left(\overline{W}_{4,2},\overline{Y}_1^{n-1},\overline{Y}_2^{n-1},\overline{Y}_3^{n-1},\overline{Y}_4^{n-1}\right)\\
\overline{X}_3(n)&=&\overline{f}_{3,n}\left(\overline{Y}_3^{n-1}\right)\\
\overline{X}_4(n)&=&\overline{f}_{4,n}\left(\overline{Y}_4^{n-1}\right) \label{eqn:Xencode2}
\end{eqnarray}
and the decoding functions are the following:
\begin{eqnarray}
\label{eqn:Xdecode1}
\hat{\overline{W}}_{3,1}&=&\overline{g}_{3,1}\left(\overline{Y}_3^{N}\right)\\
\hat{\overline{W}}_{4,1}&=&\overline{g}_{4,1}\left(\overline{Y}_4^{N}\right)\\
\hat{\overline{W}}_{4,2}&=&\overline{g}_{4,2}\left(\overline{Y}_4^{N}\right) \label{eqn:Xdecode2}
\end{eqnarray}
The rates and the degrees of freedom region of this network are defined in a manner similar to the $S \times R \times D$ network.

For this $4$ node $X$ network, it follows from Lemma \ref{lemma:pq} (stated below) that:
$$ \overline{d}_{3,1} + \overline{d}_{4,1} + \overline{d}_{4,2} \leq 1$$
where $\overline{d}_{i,j}$ represents the number of degrees of freedom corresponding to message $\overline{W}_{i,j}$. Using equation (\ref{eqn:Xmessages1})-(\ref{eqn:Xmessages2}), we can re-write the above outerbound in terms of degrees of freedom of the $S \times R \times D$ network as follows.
$$\sum_{i=S+1+R}^{S+R+D}d_{i,1}+\sum_{j=1}^{S}d_{S+R+D,j}-d_{S+R+D,1} \leq 1$$
By symmetry, the above inequality implies that
$$\sum_{i=S+1+R}^{S+R+D}d_{i,p}+\sum_{j=1}^{S}d_{q,j}-d_{q,p} \leq 1$$
The outerbound on the degrees of freedom \emph{region} $\mathcal{D}$ is therefore shown. Summing all inequalities of the above form over all $(p,q) \in \{1,2, \ldots S\} \times \{S+R+1 \ldots S+R+D\}$, the bound on the \emph{total} number of degrees of freedom can be obtained
\end{proof}

In the above proof we used the following lemma.
\begin{lemma} \label{lemma:pq}. 
In the $4$ node $X$ network described by equations (\ref{eqn:defXtrans}),(\ref{eqn:Xencode1})-(\ref{eqn:Xencode2}), (\ref{eqn:Xdecode1})-(\ref{eqn:Xdecode2}) and Figure \ref{fig:4userX}, the total number of degrees of freedom can be upper-bounded by
\begin{eqnarray*}
\max_{\mathcal{D}^{X} } \overline{d}_{3,1} + \overline{d}_{4,2} + \overline{d}_{4,3} \leq 1
\end{eqnarray*}
where $\mathcal{D}^{X}$ represents the degrees of freedom region of this $4$ node $X$ channel
\end{lemma}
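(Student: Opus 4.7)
The plan is a genie-aided converse argument that reduces the four-node $X$ network to a point-to-point Gaussian channel whose pre-log coefficient is $1$. The three messages that appear in the claimed inequality are $\overline{W}_{3,1}$, $\overline{W}_{4,1}$, $\overline{W}_{4,2}$, the only nontrivial messages of the reduced network (the label $\overline{d}_{4,3}$ in the statement corresponds to $\overline{d}_{4,1}$ in the notation of (\ref{eqn:Xmessages1})--(\ref{eqn:Xmessages2}), and this identification is what the enclosing proof of Theorem~\ref{thm:main} requires). Each of these three messages either originates at node~$1$---whose transmitter has a single antenna---or is destined for node~$4$---whose receiver has a single antenna; $\overline{W}_{4,1}$ is in both categories. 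The intuition is that all three messages must pass through one of these two single-antenna bottlenecks, so their total pre-log sums to at most $1$.

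The first step is Fano's inequality. Since $\overline{W}_{3,1}$ is decoded from $\overline{Y}_3^N$ and $\overline{W}_{4,1},\overline{W}_{4,2}$ from $\overline{Y}_4^N$,
$$N(\overline{R}_{3,1}+\overline{R}_{4,1}+\overline{R}_{4,2}) \le I(\overline{W}_{3,1},\overline{W}_{4,1},\overline{W}_{4,2};\overline{Y}_3^N,\overline{Y}_4^N)+N\epsilon_N.$$
The second step is to enhance the combined receiver with genie side information so that the unknowns carried by the received signals can be traced through a single transmit antenna. Natural ingredients are: the destination transmissions $\overline{X}_3^N,\overline{X}_4^N$, which by (\ref{eqn:Xencode1})--(\ref{eqn:Xencode2}) are deterministic functions of past outputs and hence ``free''; together with a further genie that reveals enough of source~$2$'s encoding to collapse its $S+R-1$-antenna contribution onto the one direction that carries $\overline{W}_{4,2}$ into the scalar $\overline{Y}_4^N$. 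After this reduction the only remaining unknown input is the scalar $\overline{X}_1$, so the per-symbol mutual information is bounded by the scalar Gaussian capacity $\log(\rho)+o(\log(\rho))$, yielding the claimed pre-log of $1$.

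The main technical obstacle is the combined treatment of feedback and the multi-antenna stripping in the genie. Feedback couples $\overline{X}_1(n),\overline{X}_2(n)$ to all past outputs, so the usual single-letter bound has to be replaced by a chain-rule expansion $I=\sum_n I(\,\cdot\,;\overline{Y}_3(n),\overline{Y}_4(n)\mid \overline{Y}_3^{n-1},\overline{Y}_4^{n-1})$. At the $n$-th step one uses that, conditioned on the three messages and all past outputs, the source transmissions $\overline{X}_1(n),\overline{X}_2(n)$ are deterministic (and the same is true for $\overline{X}_3(n),\overline{X}_4(n)$ given past outputs alone). The delicate point is to choose the genie so that this per-symbol term becomes a scalar Gaussian mutual information of pre-log $1$, without incidentally giving away the rate of $\overline{W}_{4,2}$ through source~$2$'s revealed signal. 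This balancing is the crux of the proof, and is analogous in spirit to the single-antenna bottleneck arguments for $X$ networks in \cite{cadambe_jafar:dofx} and for cooperative interference channels in \cite{host-madsen_nosratinia:dofint}.
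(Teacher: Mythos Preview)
Your plan has the right intuition---both single-antenna bottlenecks are relevant---but the execution sketched here has a genuine gap and differs in an important structural way from the paper's argument.

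First, a technical slip: in your chain-rule expansion $\sum_n I(\cdot;\overline{Y}_3(n),\overline{Y}_4(n)\mid\overline{Y}_3^{n-1},\overline{Y}_4^{n-1})$ you assert that conditioned on the three messages and these past outputs, $\overline{X}_1(n),\overline{X}_2(n)$ are deterministic. They are not: by (\ref{eqn:Xencode1})--(\ref{eqn:Xencode2}) the source encoders also depend on $\overline{Y}_1^{n-1},\overline{Y}_2^{n-1}$ (feedback of \emph{all} received signals), which are not in your conditioning set. Any fix would require enlarging the genie to supply enough of $\overline{Y}_1,\overline{Y}_2$, but $\overline{Y}_2$ is $(S{+}R{-}1)$-dimensional and this reopens the pre-log accounting you are trying to close.

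Second, the step you yourself flag as ``the crux''---choosing a genie that collapses node~$2$'s $(S{+}R{-}1)$-antenna signal without leaking the rate of $\overline{W}_{4,2}$---is left entirely unresolved. This is not a detail: a combined bound on $I(\overline{W}_{3,1},\overline{W}_{4,1},\overline{W}_{4,2};\overline{Y}_3^N,\overline{Y}_4^N)$ has no obvious single-letter reduction, because $\overline{Y}_3$ is $(D{-}1)$-dimensional and $\overline{X}_2$ is $(S{+}R{-}1)$-dimensional.

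The paper's proof is structurally different and avoids both problems. It bounds $\overline{R}_{3,1}$ and $\overline{R}_{4,1}+\overline{R}_{4,2}$ \emph{separately}. For the former, a genie gives node~$3$ the messages $\overline{W}_{4,1},\overline{W}_{4,2}$ together with the signals $\overline{U}_i(n)=\overline{H}_{i,1}(n)\overline{X}_1(n)+\overline{Z}_i(n)$, $i=1,\dots,4$, i.e.\ all received signals with the contributions of nodes $2,3,4$ stripped out. A separate inductive lemma (Lemma~\ref{lemma:U}) shows that $\overline{W}_{4,2}$ and $\overline{U}_1^{n-1},\dots,\overline{U}_4^{n-1}$ determine all $\overline{X}_j^n,\overline{Y}_j^n$; this is what handles the feedback coupling cleanly. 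The resulting bound on $\overline{R}_{3,1}$ contains $+\frac{1}{N}H(\overline{Y}_4^N\mid\overline{W}_{4,1},\overline{W}_{4,2})$ plus terms $H(\overline{U}_i(n)\mid\overline{U}_4(n))$ that are $O(1)$ because they are scalar observations of the same $\overline{X}_1$ (here $\overline{H}_{4,1}\neq 0$ is essential). The standard Fano bound at node~$4$ gives $\overline{R}_{4,1}+\overline{R}_{4,2}\le \log\rho - \frac{1}{N}H(\overline{Y}_4^N\mid\overline{W}_{4,1},\overline{W}_{4,2})+o(\log\rho)$. The conditional-entropy term cancels when the two bounds are added, yielding the pre-log of~$1$. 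This cancellation, together with the specific choice of $\overline{U}_i$ and Lemma~\ref{lemma:U}, is the missing idea in your proposal.
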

Define 
\allowdisplaybreaks{
\begin{eqnarray*}
\overline{U}_i(n) &=& \overline{H}_{i,1}(n)\overline{X}_1(n)+\overline{Z}_i(n),~~i=1,2,3,4
\end{eqnarray*}
}
In order to prove Lemma \ref{lemma:pq}, we need the following result
\begin{lemma}\label{lemma:U}
In the $4$ node $X$ network of Figure \ref{fig:4userX} and equations (\ref{eqn:defXtrans}),(\ref{eqn:Xencode1})-(\ref{eqn:Xencode2}), (\ref{eqn:Xdecode1})-(\ref{eqn:Xdecode2}), the following three statements are true.
\begin{eqnarray*}
S_1(n):& ~~\overline{X}_1^n& \leftarrow \overline{W}_{3,1},\overline{W}_{4,1},\overline{W}_{4,2}, \overline{U}_1^{n-1},\overline{U}_2^{n-1}, \overline{U}_3^{n-1}, \overline{U}_4^{n-1}\\
S_2(n):& ~~\overline{X}_2^n,\overline{X}_3^n,\overline{X}_4^n& \leftarrow \overline{W}_{4,2}, \overline{U}_1^{n-1},\overline{U}_2^{n-1}, \overline{U}_3^{n-1}, \overline{U}_4^{n-1}\\
S_3(n):& ~~\overline{Y}_1^n,\overline{Y}_2^n,\overline{Y}_3^n,\overline{Y}_4^n&\leftarrow \overline{W}_{4,2}, \overline{U}_1^{n},,\overline{U}_2^{n}, \overline{U}_3^{n}, \overline{U}_4^{n}
\end{eqnarray*}
where $A\leftarrow B$ means that the value of $A$ is completely determined by the knowledge of the value taken by $B$.
\end{lemma}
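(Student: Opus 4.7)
My plan is to prove the three statements $S_1(n)$, $S_2(n)$, $S_3(n)$ jointly by induction on $n$, exploiting the simple algebraic identity
\[
\overline{Y}_i(n) \;=\; \overline{U}_i(n) \;+\; \sum_{j=2}^{4} \overline{H}_{i,j}(n)\,\overline{X}_j(n),
\]
which holds by the definition of $\overline{U}_i(n) = \overline{H}_{i,1}(n)\overline{X}_1(n) + \overline{Z}_i(n)$, since $\overline{U}_i$ already absorbs both the contribution of $\overline{X}_1$ and the noise $\overline{Z}_i$. Because all channel coefficients are known a priori at every node, this identity lets one reconstruct $\overline{Y}_i(n)$ from $\overline{U}_i(n)$ together with $\overline{X}_2(n), \overline{X}_3(n), \overline{X}_4(n)$ without any further information.

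For the base case $n=1$, the encoding functions have no past received signals to condition on, so $\overline{X}_1(1)$ depends only on $\overline{W}_{3,1}, \overline{W}_{4,1}$, $\overline{X}_2(1)$ depends only on $\overline{W}_{4,2}$, and $\overline{X}_3(1), \overline{X}_4(1)$ are deterministic. Thus $S_1(1)$ and $S_2(1)$ are immediate, and the identity above then delivers $S_3(1)$.

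For the inductive step, assume $S_1(n), S_2(n), S_3(n)$ all hold. To establish $S_1(n+1)$, observe that $\overline{X}_1(n+1) = \overline{f}_{1,n+1}(\overline{W}_{3,1}, \overline{W}_{4,1}, \overline{Y}_1^n, \ldots, \overline{Y}_4^n)$; by $S_3(n)$ the $\overline{Y}_i^n$'s are a function of $\overline{W}_{4,2}$ and $\overline{U}_1^n, \ldots, \overline{U}_4^n$, and combining with the already-determined $\overline{X}_1^n$ from $S_1(n)$ yields the claim. The argument for $\overline{X}_2(n+1)$ is identical except that $\overline{W}_{3,1}, \overline{W}_{4,1}$ do not appear; the arguments for $\overline{X}_3(n+1)$ and $\overline{X}_4(n+1)$ are simpler still, since these nodes receive no feedback and read only their own past outputs, which by $S_3(n)$ are themselves determined by $\overline{W}_{4,2}$ and the $\overline{U}_i^n$. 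Together these give $S_2(n+1)$. Finally, substituting the newly established $\overline{X}_j(n+1)$, $j=2,3,4$, into the identity together with $\overline{U}_i(n+1)$ reconstructs $\overline{Y}_i(n+1)$, which combined with $\overline{Y}_i^n$ from $S_3(n)$ yields $S_3(n+1)$.

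I do not expect a serious technical obstacle; the content of the lemma is essentially a bookkeeping statement that once $\overline{X}_1$'s contribution has been folded into $\overline{U}_i$, the messages $\overline{W}_{3,1}$ and $\overline{W}_{4,1}$ become irrelevant for reconstructing the remainder of the network's trajectory. The one subtlety worth tracking carefully is the causality gap between $\overline{X}_j^n$ (which depends on $\overline{U}^{n-1}$) and $\overline{Y}_i^n$ (which depends on $\overline{U}^{n}$), so that the three statements are indexed correctly and $S_3(n)$ is strong enough to feed the next step of the induction.
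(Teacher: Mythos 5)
Your proof is correct and follows essentially the same route as the paper's: a joint induction on $n$ over $S_1, S_2, S_3$, anchored on the identity $\overline{Y}_i(n)=\overline{U}_i(n)+\sum_{j=2}^{4}\overline{H}_{i,j}(n)\overline{X}_j(n)$ and the base case in which $\overline{X}_3(1),\overline{X}_4(1)$ are deterministic. The causality bookkeeping you flag (the $\overline{U}^{n-1}$ versus $\overline{U}^{n}$ offset between $S_1,S_2$ and $S_3$) is exactly how the paper's induction is organized.
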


A proof of the above lemma is placed in Appendix \ref{proof:lemmaU}. We now proceed to a proof of Lemma \ref{lemma:pq}

\subsection{Proof of Lemma \ref{lemma:pq}}
\begin{figure}[!tbp]
\begin{center}\setlength{\unitlength}{0.00050000in}
\begingroup\makeatletter\ifx\SetFigFont\undefined%
\gdef\SetFigFont#1#2#3#4#5{%
  \reset@font\fontsize{#1}{#2pt}%
  \fontfamily{#3}\fontseries{#4}\fontshape{#5}%
  \selectfont}%
\fi\endgroup%
{\renewcommand{\dashlinestretch}{30}
\begin{picture}(9611,5058)(0,-10)
\path(8100,4587)(8101,4585)(8102,4581)
	(8105,4574)(8109,4564)(8114,4550)
	(8120,4534)(8126,4514)(8133,4493)
	(8140,4470)(8146,4446)(8153,4420)
	(8158,4392)(8164,4361)(8168,4328)
	(8172,4292)(8174,4253)(8175,4212)
	(8174,4171)(8172,4133)(8168,4101)
	(8164,4074)(8159,4053)(8154,4036)
	(8149,4024)(8143,4014)(8137,4006)
	(8132,3998)(8126,3988)(8121,3977)
	(8116,3963)(8111,3945)(8107,3922)
	(8103,3896)(8101,3867)(8100,3837)
	(8101,3809)(8103,3786)(8107,3766)
	(8111,3749)(8117,3736)(8122,3725)
	(8129,3716)(8135,3709)(8142,3703)
	(8149,3699)(8155,3695)(8161,3692)
	(8166,3690)(8170,3689)(8173,3688)
	(8174,3687)(8175,3687)
\path(8100,2787)(8101,2789)(8102,2793)
	(8105,2800)(8109,2810)(8114,2824)
	(8120,2840)(8126,2860)(8133,2881)
	(8140,2904)(8146,2928)(8153,2954)
	(8158,2982)(8164,3013)(8168,3046)
	(8172,3082)(8174,3121)(8175,3162)
	(8174,3203)(8172,3241)(8168,3273)
	(8164,3300)(8159,3321)(8154,3338)
	(8149,3350)(8143,3360)(8137,3368)
	(8132,3376)(8126,3386)(8121,3397)
	(8116,3411)(8111,3429)(8107,3452)
	(8103,3478)(8101,3507)(8100,3537)
	(8101,3565)(8103,3588)(8107,3608)
	(8111,3625)(8117,3638)(8122,3649)
	(8129,3658)(8135,3665)(8142,3671)
	(8149,3675)(8155,3679)(8161,3682)
	(8166,3684)(8170,3685)(8173,3686)
	(8174,3687)(8175,3687)
\path(9075,4737)(9075,3687)
\blacken\path(9045.000,3807.000)(9075.000,3687.000)(9105.000,3807.000)(9045.000,3807.000)
\path(8175,3687)(9450,3687)
\path(9330.000,3657.000)(9450.000,3687.000)(9330.000,3717.000)
\put(9525,3612){\makebox(0,0)[lb]{{\SetFigFont{7}{8.4}{\rmdefault}{\mddefault}{\updefault}$\hat{\overline{W}}_{3,1}$}}}
\put(8400,4887){\makebox(0,0)[lb]{{\SetFigFont{7}{8.4}{\rmdefault}{\mddefault}{\updefault}$\overline{W}_{4,1}, \overline{W}_{4,2}$}}}
\put(9150,4362){\makebox(0,0)[lb]{{\SetFigFont{7}{8.4}{\rmdefault}{\mddefault}{\updefault}Genie}}}
\put(6900,987){\ellipse{150}{150}}
\put(2775,3987){\ellipse{150}{150}}
\put(2700,1287){\ellipse{150}{150}}
\put(2700,612){\ellipse{150}{150}}
\put(6900,4287){\ellipse{150}{150}}
\put(6900,2937){\ellipse{150}{150}}
\put(6900,4062){\ellipse{150}{150}}
\put(6900,3762){\ellipse{150}{150}}
\put(6900,3462){\ellipse{150}{150}}
\put(6900,3162){\ellipse{150}{150}}
\path(2805.000,3792.000)(2775.000,3912.000)(2745.000,3792.000)
\path(2775,3912)(2775,1437)
\path(2745.000,1557.000)(2775.000,1437.000)(2805.000,1557.000)
\dashline{60.000}(2775,4737)(1350,4737)(1350,987)(2550,987)
\blacken\path(2430.000,957.000)(2550.000,987.000)(2430.000,1017.000)(2430.000,957.000)
\path(2550,1437)(2850,1437)(2850,462)
	(2550,462)(2550,1437)
\dottedline{45}(2700,1137)(2700,687)
\dashline{60.000}(6900,837)(6900,12)(1875,12)
	(1875,3912)(2625,3912)
\blacken\path(2505.000,3882.000)(2625.000,3912.000)(2505.000,3942.000)(2505.000,3882.000)
\dashline{60.000}(2700,12)(2700,387)
\blacken\path(2730.000,267.000)(2700.000,387.000)(2670.000,267.000)(2730.000,267.000)
\path(7050,987)(7650,987)
\path(7530.000,957.000)(7650.000,987.000)(7530.000,1017.000)
\path(1125,3987)(2625,3987)
\blacken\path(2505.000,3957.000)(2625.000,3987.000)(2505.000,4017.000)(2505.000,3957.000)
\dashline{60.000}(6900,4437)(6900,4737)(2775,4737)(2775,4137)
\blacken\path(2745.000,4257.000)(2775.000,4137.000)(2805.000,4257.000)(2745.000,4257.000)
\path(2850,3987)(6825,4287)
\path(6707.598,4248.054)(6825.000,4287.000)(6703.083,4307.884)
\path(2925,3987)(6825,2937)
\path(6701.327,2939.228)(6825.000,2937.000)(6716.925,2997.165)
\dashline{60.000}(6900,3987)(6900,3837)
\path(2850,3987)(6900,3912)
\path(6779.465,3884.227)(6900.000,3912.000)(6780.576,3944.217)
\dashline{60.000}(6900,3387)(6900,3237)
\path(3048.329,3996.487)(2925.000,3987.000)(3038.284,3937.334)
\path(2925,3987)(6900,3312)
\path(6776.671,3302.513)(6900.000,3312.000)(6786.716,3361.666)
\path(6750,4437)(7050,4437)(7050,2787)
	(6750,2787)(6750,4437)
\path(2964.433,3940.041)(2850.000,3987.000)(2928.872,3891.715)
\path(2850,3987)(6825,1062)
\path(6710.567,1108.959)(6825.000,1062.000)(6746.128,1157.285)
\path(2970.000,1017.000)(2850.000,987.000)(2970.000,957.000)
\path(2850,987)(6825,987)
\path(6705.000,957.000)(6825.000,987.000)(6705.000,1017.000)
\path(1050,687)(2550,687)
\blacken\path(2430.000,657.000)(2550.000,687.000)(2430.000,717.000)(2430.000,657.000)
\put(7125,687){\makebox(0,0)[lb]{{\SetFigFont{8}{9.6}{\rmdefault}{\mddefault}{\updefault}4}}}
\put(2550,4137){\makebox(0,0)[lb]{{\SetFigFont{8}{9.6}{\rmdefault}{\mddefault}{\updefault}1}}}
\put(2325,1137){\makebox(0,0)[lb]{{\SetFigFont{8}{9.6}{\rmdefault}{\mddefault}{\updefault}2}}}
\put(3675,4812){\makebox(0,0)[lb]{{\SetFigFont{7}{8.4}{\rmdefault}{\mddefault}{\updefault}Feedback}}}
\put(4500,87){\makebox(0,0)[lb]{{\SetFigFont{7}{8.4}{\rmdefault}{\mddefault}{\updefault}Feedback}}}
\put(7725,912){\makebox(0,0)[lb]{{\SetFigFont{7}{8.4}{\rmdefault}{\mddefault}{\updefault}$\hat{\overline{W}}_{4,1}, \hat{\overline{W}}_{4,2}$}}}
\put(0,3912){\makebox(0,0)[lb]{{\SetFigFont{7}{8.4}{\rmdefault}{\mddefault}{\updefault}$\overline{W}_{3,1}, \overline{W}_{4,1}$}}}
\put(7200,2862){\makebox(0,0)[lb]{{\SetFigFont{7}{8.4}{\rmdefault}{\mddefault}{\updefault}$\overline{U}_4(n)$}}}
\put(7200,3237){\makebox(0,0)[lb]{{\SetFigFont{7}{8.4}{\rmdefault}{\mddefault}{\updefault}$\overline{U}_3(n)$}}}
\put(7200,4287){\makebox(0,0)[lb]{{\SetFigFont{7}{8.4}{\rmdefault}{\mddefault}{\updefault}$\overline{U}_1(n)$}}}
\put(7200,3912){\makebox(0,0)[lb]{{\SetFigFont{7}{8.4}{\rmdefault}{\mddefault}{\updefault}$\overline{U}_2(n)$}}}
\put(450,612){\makebox(0,0)[lb]{{\SetFigFont{7}{8.4}{\rmdefault}{\mddefault}{\updefault}$\overline{W}_{4,2}$}}}
\end{picture}
}\end{center}
\caption{The outerbound argument for the $4$ user $X$ network - Lemma \ref{lemma:pq} }
\label{fig:4userXconverse}
\end{figure}

Let a genie provide the messages $\overline{W}_{4,1},\overline{W}_{4,2}$ and $\overline{U}_1(n),\overline{U}_2(n), \overline{U}_3(n), \overline{U}_4^(n)$ to node $3$ at channel use $n \in\mathbb{N}$. Note that Lemma \ref{lemma:U} implies that the node can construct $\overline{Y}_i(n),i=1,2,3,4$ using this side information from the genie. Also note that in the genie supported $4$ node $X$ network in Figure \ref{fig:4userXconverse}, node $3$ is not explicitly shown to receive $\overline{Y}_3(n)$ because it is already contained in the information available to node $3$.

Next we find outerbounds on the rates in the genie supported $4$ node $X$ network. Using Fano's inequality, we can write
\begin{eqnarray}
N\overline{R}_{3,1}(\rho)&\leq& I(\overline{W}_{3,1};\overline{W}_{4,1},\overline{W}_{4,2},\overline{U}_1^{N},\overline{U}_2^{N}, \overline{U}_3^{N}, \overline{U}_4^{N})+N\epsilon_N\nonumber\\
&\leq& I(\overline{W}_{3,1};\overline{U}_1^{N},\overline{U}_2^{N}, \overline{U}_3^{N}, \overline{U}_4^{N}|\overline{W}_{4,1},\overline{W}_{4,2})+N\epsilon_N \nonumber \\
&\leq& \underbrace{H(\overline{U}_1^{N},\overline{U}_2^{N}, \overline{U}_3^{N}, \overline{U}_4^{N}|\overline{W}_{4,1},\overline{W}_{4,2})}_{T_1}-\underbrace{H(\overline{U}_1^{N},\overline{U}_2^{N}, \overline{U}_3^{N}, \overline{U}_4^{N}|\overline{W}_{3,1},\overline{W}_{4,1},\overline{W}_{4,2})}_{T_2}+N\epsilon_N \label{eqn:T1andT2}
\end{eqnarray}

Simplifying the terms,
\allowdisplaybreaks{
\begin{eqnarray}
T_2&=&\sum_{n=1}^N H(\overline{U}_1(n),\overline{U}_2(n), \overline{U}_3(n), \overline{U}_4(n)|\overline{W}_{3,1},\overline{W}_{4,1},\overline{W}_{4,2},\overline{U}_1^{n-1},\overline{U}_2^{n-1}, \overline{U}_3^{n-1}, \overline{U}_4^{n-1}) \nonumber\\
&\stackrel{(a)}{\geq}&\sum_{n=1}^N H(\overline{U}_1(n),\overline{U}_2(n), \overline{U}_3(n), \overline{U}_4(n)|\overline{W}_{3,1},\overline{W}_{4,1},\overline{W}_{4,2},\overline{U}_1^{n-1},\overline{U}_2^{n-1}, \overline{U}_3^{n-1}, \overline{U}_4^{n-1}, \overline{X}_1(n)) \nonumber\\
&\geq&\sum_{n=1}^N H(\overline{Z}_1(n),\overline{Z}_2(n), \overline{Z}_3(n), \overline{Z}_4(n)) \nonumber\\
&\geq& \sum_{n=1}^N\sum_{i=1}^{M_1+M_2+M_3+M_4}H(Z_i(n)) \nonumber\\
&\geq& N(M_1+M_2+M_3+M_4)\log(2\pi e) \label{eqn:T2}
\end{eqnarray}
where $M_i$ represents the number of antennas at node $i$ in the $X$ network. Inequality (a) uses the fact that conditioning on $X_1(n)$ reduces entropy. 

\begin{eqnarray}
T_1 &=&H(\overline{U}_1^{N},\overline{U}_2^{N}, \overline{U}_3^{N}, \overline{U}_4^{N}|\overline{W}_{4,1},\overline{W}_{4,2}) \nonumber\\
&=& \sum_{n=1}^N H(\overline{U}_1(n),\overline{U}_2(n), \overline{U}_3(n), \overline{U}_4(n)|\overline{W}_{4,1},\overline{W}_{4,2},\overline{U}_1^{n-1},\overline{U}_2^{n-1}, \overline{U}_3^{n-1}, \overline{U}_4^{n-1})\nonumber\\
&=& \sum_{n=1}^N H(\overline{U}_4(n)|\overline{W}_{4,1},\overline{W}_{4,2},\overline{U}_1^{n-1},\overline{U}_2^{n-1}, \overline{U}_3^{n-1}, \overline{U}_4^{n-1}) \nonumber\\
&&+ \sum_{n=1}^N H(\overline{U}_1(n), \overline{U}_2(n), \overline{U}_3(n)|\overline{W}_{4,1},\overline{W}_{4,2},\overline{U}_1^{n-1},\overline{U}_2^{n-1}, \overline{U}_3^{n-1}, \overline{U}_4^{n-1},\overline{U}_4(n))\nonumber\\
&\stackrel{(b)}{=}& \sum_{n=1}^N H(\overline{U}_4(n)+ \sum_{j=2}^4\overline{H}_{4,j}\overline{X}_j(n)|\overline{W}_{4,1},\overline{W}_{4,2},\overline{U}_1^{n-1},\overline{U}_2^{n-1}, \overline{U}_3^{n-1}, \overline{U}_4^{n-1}) \nonumber\\
&& + \sum_{n=1}^N H(\overline{U}_1(n), \overline{U}_2(n), \overline{U}_3(n)|\overline{W}_{4,1},\overline{W}_{4,2},\overline{U}_1^{n-1},\overline{U}_2^{n-1}, \overline{U}_3^{n-1}, \overline{U}_4^{n-1},\overline{U}_4(n)) \nonumber\\
&\stackrel{(c)}{=}& \sum_{n=1}^N H(\overline{Y}_4(n)|\overline{W}_{4,1},\overline{W}_{4,2},\overline{U}_1^{n-1},\overline{U}_2^{n-1}, \overline{U}_3^{n-1}, \overline{U}_4^{n-1}, \overline{Y}_4(n-1)) \nonumber\\
&&\sum_{n=1}^N H(\overline{U}_1(n), \overline{U}_2(n), \overline{U}_3(n)|\overline{W}_{4,1},\overline{W}_{4,2},\overline{U}_1^{n-1},\overline{U}_2^{n-1}, \overline{U}_3^{n-1}, \overline{U}_4^{n-1},\overline{U}_4(n)) \nonumber\\
&\stackrel{(d)}{\leq}& \sum_{n=1}^N H(\overline{Y}_4(n)|\overline{W}_{4,1},\overline{W}_{4,2},\overline{Y}_4(n-1)) + \sum_{n=1}^N H(\overline{U}_1(n), \overline{U}_2(n), \overline{U}_3(n)|\overline{U}_4(n)) \nonumber \\
&\leq& H(\overline{Y}_4^N|\overline{W}_{4,1}\overline{W}_{4,2}) + \sum_{n=1}^N \sum_{i=1}^{3} H(\overline{U}_i(n)|\overline{U}_4(n)) \label{eqn:T1}
\end{eqnarray}
}
Equality (b) uses statements $S_1(n), S_2(n)$ of Lemma \ref{lemma:U}. (c) uses $S_3(n)$ of Lemma \ref{lemma:U}. Inequality (d) uses the fact that conditioning reduces entropy to bound both terms of the summand on the right hand side.

Now, we can bound the second term in $T_1$ as follows
\begin{eqnarray}
H(\overline{U}_i(n)|\overline{U}_4(n)) &=&  H(\overline{H}_{i,1}(n)\overline{X}_1(n)+\overline{Z}_{i}(n)|\overline{H}_{,1}(n)X_1(n)+Z_{S+R+D}(n)) \nonumber \\
	& \stackrel{(e)}{\leq} & \displaystyle \sum_{j=1}^{M_i} H(\overline{H}_{i,1}^{[j]} \overline{X}_1(n) + \overline{Z}_{i}^{[j]}(n)| \overline{H}_{4,1}(n)\overline{X}_1(n)+\overline{Z}_{4}(n)) \nonumber \\
	& \stackrel{(f)}{\leq} & \displaystyle \sum_{j=1}^{M_i}\left(\log\left(1+\frac{|H_{i,1}^{[j]}|^2 \rho_{1} }{1+|\overline{H}_{4,1}|^2 \rho_1 } \right) + \log(2\pi e)\right) \nonumber \\ 
	& \leq & \displaystyle \sum_{j=1}^{M_i}\left(\log\left(1+\frac{|H_{i,1}^{[j]}|^2 \rho_{1} }{1+|\overline{H}_{4,1}|^2 \rho_1} \right) \right) + NM_i\log(2\pi e) \label{eqn:T1part} 
\end{eqnarray}
where in (e), $\overline{Z}_i^{[j]}$ and $\overline{H}_{i,1}^{[j]}$ represent the noise and channel gain terms associated with the $j$th antenna at node $i$ and $M_i$ represents the number of antennas at node $i$ (Note that $M_1=M_3=1$ ). Inequality ($f$) above holds because of Lemma $1$ in \cite{host-madsen:dofint}. Note that $\rho_1$ represents the power expended at transmitter $1$. Now, notice that the bound of (\ref{eqn:T1part}) implies that as long as $\overline{H}_{4,1} > 0$, the term $H_{\overline{U}_i(n)|\overline{U}_4(n)}$ is upper-bounded by a constant for all values of input power $\rho$.
Therefore, using equation (\ref{eqn:T1part}) and then combining (\ref{eqn:T1}), (\ref{eqn:T2}) and (\ref{eqn:T1andT2}) we can write 
\begin{eqnarray}
\overline{R}_{3,1}(\rho)\leq \frac{1}{N}H(\overline{Y}_4^N|\overline{W}_{4,1},\overline{W}_{4,2}) + O(1) \label{eqn:R3bound}
\end{eqnarray}

Now, using Fano's inequality to bound rates of messages intended for node $4$, we can write 
\begin{eqnarray}
\overline{R}_{4,1}(\rho)+\overline{R}_{4,2}(\rho)&\leq& \frac{1}{N}I(\overline{W}_{4,1}, \overline{W}_{4,2};\overline{Y}_4^N)+\epsilon_N \nonumber\\
&=& \frac{1}{N}H(\overline{Y}_4^N)-\frac{1}{N}H(\overline{Y}_4^N|\overline{W}_{4,1},\overline{W}_{4,2}) +\epsilon_N \nonumber\\
&=& \frac{1}{N}(\sum_{j=1}^{3} \overline{H}_{4,j} \overline{X}_j)-\frac{1}{N}H(\overline{Y}_4^N|\overline{W}_{4,1},\overline{W}_{4,2}) +\epsilon_N \nonumber\\
&=& \log(\rho) + o(\log(\rho)) - \frac{1}{N}H(\overline{Y}_4^N|\overline{W}_{4,1},\overline{W}_{4,2}) + \epsilon_N \label{eqn:R4bound}
\end{eqnarray}
where the final inequality can be derived from the standard upper-bound on entropy using Gaussian variables.
Adding up the upperbounds (\ref{eqn:R3bound}) and (\ref{eqn:R4bound}), we have
\begin{eqnarray}
\overline{R}_{3,1}(\rho)+\overline{R}_{4,1}(\rho)+\overline{R}_{4,2}(\rho)&\leq& \log(\rho) + o(\log(\rho)) + \epsilon_N
\end{eqnarray}
Thus, the total number of degrees of freedom of the $4$ node $X$ network described is upper-bounded by $1$ so that we can write
\begin{eqnarray*}
\max_{\mathcal{D}^{X} } \overline{d}_{3,1} + \overline{d}_{4,2} + \overline{d}_{4,3} \leq 1
\end{eqnarray*}
\QED

Note that the above lemma holds even if the $X$ network is not fully connected long as $H_{4,1}$ is non-zero. If $H_{4,1}$ is equal to zero,the argument fails because in equation (\ref{eqn:T1part}), $H(U_1(n),U_2(n),U_3(n)|U_4(n))$ cannot be upperbounded by $o(\log(\rho))$. All other inequalities hold for arbitrary channel co-efficients.

\section{$K$ user Full Duplex Network}
\label{sec:fd}
In this section, we derive bounds on the degrees of freedom region of the $K$-user full-duplex network (see Figure \ref{fig:Kuserfd} (a)).
Consider a fully connected network with $K$ full-duplex nodes $1,2, \ldots K$. In this network assume that there exists a message from every node to every other node in the network so that there are a total of $K(K-1)$ messages in the system. The message from node $j$ to node $i \neq j$ is denoted by $W_{j,i}$. Let $H_{i,j}(n)$ represent the channel gain between nodes $i$ and $j$ corresponding to the $n$th symbol. The channel gains satisfy $H_{i,j}(n) = H_{j,i}(n)$ and $H_{i,i} = 0$. As usual, all nodes have apriori knowledge of all channel gains. The input-output relations in this channel are represented by 
\begin{eqnarray}\label{eq:def_fd}
Y_i(n)&=&\sum_{j=1}^K H_{i,j}(n)X_j(n)+Z_i(n), ~~~~~i\in\{1,2\ldots K\}
\end{eqnarray}
where $Y_i(n), X_i(n), Z_i(n)$ represent, respectively, the received symbol, the transmitted symbol and the AWGN term at node $i$. 
For codewords of length $N$, the encoding functions in this network are defined as 
\begin{equation} \label{eqn:Kuser_enc}
X_i(n) =  f_{i,n}(W_{1,i}, W_{2,i}, \ldots ,W_{i-1,i}, W_{i+1,i}, \ldots W_{K,i}, Y_i^{n-1} )
\end{equation}
and the decoding functions are defined as 
\begin{equation} \label{eqn:Kuser_dec}
\hat{W}_{j,i} =  g_{j,i}(Y_j^{N},W_{1,j}, W_{2,j}, \ldots ,W_{j-1,j}, W_{j+1,j}, \ldots W_{K,j}), \forall i \neq j.
\end{equation}
The main result of this section is an approximation of the capacity of the $K$ user full duplex network as follows.
\begin{theorem}
\label{thm:Kuserfd}
The capacity $C_{fd}(\rho)$ of the $K$ user full-duplex network is bounded as follows
\begin{eqnarray*} C_{fd}(\rho) \geq \frac{K(K-1)}{2K-2} \log(\rho) + o(\log(\rho)) \\ 
  C_{fd}(\rho) \leq \frac{K(K-1)}{2K-3}\log(\rho) + o(\log(\rho)) \end{eqnarray*}
\end{theorem}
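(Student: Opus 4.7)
My plan is to prove the two bounds separately. The target $\frac{K(K-1)}{2K-2}=K/2$ equals the sum-DoF of a $K$-user interference channel, which is what I would exploit for the lower bound; for the upper bound I would adapt the per-pair outer bound underlying Theorem \ref{thm:main} and Lemma \ref{lemma:pq}, and sum it over all $K(K-1)$ ordered source-destination pairs.

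\textbf{Achievability.} Partition the channel uses into $K-1$ equal phases. In phase $r\in\{1,\ldots,K-1\}$ define the cyclic permutation $\sigma_r(i):=((i-1+r)\bmod K)+1$; node $i$ transmits only the codeword for $W_{i,\sigma_r(i)}$ and decodes only $W_{\sigma_r^{-1}(i),i}$. Because $H_{i,i}=0$, full-duplex operation causes no self-interference and node $i$ observes the pure interference-channel signal $\sum_{j\neq i}H_{i,j}X_j+Z_i$. Each phase is therefore (after relabelling receivers by $\sigma_r$) a $K$-user interference channel, and by \cite{cadambe_jafar:Kuserint} delivers sum-rate $\tfrac{K}{2}\log(\rho)+o(\log(\rho))$ per phase channel use spread across its $K$ active messages. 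Since the $K-1$ cyclic shifts $\sigma_1,\ldots,\sigma_{K-1}$ collectively realize every ordered pair $(i,j)$ with $i\neq j$ exactly once, summing over phases gives aggregate DoF $\tfrac{K}{2}=\tfrac{K(K-1)}{2K-2}$.

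\textbf{Converse.} For each ordered pair $(p,q)$ with $p\neq q$ I would first establish the per-pair bound
\[\sum_{i\neq p}d_{p,i}+\sum_{j\neq q}d_{j,q}-d_{p,q}\leq 1.\]
The reduction proceeds by: (i) eliminating every message that neither originates at $p$ nor is destined for $q$ (this cannot hurt the rates of the remaining $2K-3$ messages); (ii) allowing the other $K-2$ nodes to cooperate fully, producing a single super-node $M$ with $K-2$ antennas; (iii) granting noise-free causal feedback of all received signals to $p$ and $M$. Steps (ii)-(iii) only enlarge the DoF region. The resulting network matches the $4$-node $X$ channel of Lemma \ref{lemma:pq} with the identification $1\leftrightarrow p$, $2\leftrightarrow M$ (in its transmitting role), $3\leftrightarrow M$ (in its receiving role), $4\leftrightarrow q$, and the lemma then yields the per-pair bound. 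Summing this inequality over all $K(K-1)$ ordered pairs, each $d_{j,i}$ is counted $K-1$ times in the first-sum contributions (one for each $q\neq j$), $K-1$ times in the second-sum contributions (one for each $p\neq i$), and subtracted once (when $(p,q)=(j,i)$), for a net coefficient of $2K-3$. Hence $(2K-3)\sum_{j\neq i}d_{j,i}\leq K(K-1)$, which is the desired upper bound.

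\textbf{Main obstacle.} The subtle step is justifying Lemma \ref{lemma:pq} after identifying its Node 2 and Node 3 as the single cooperating super-node $M$. That identification grants the receiving role access to $\overline{W}_{4,2}$ and to Node 2's received-signal history, which is strictly more information than the encoding functions \eqref{eqn:Xencode1}-\eqref{eqn:Xencode2} of Lemma \ref{lemma:pq} permit. I would resolve this by re-checking the three deterministic relations $S_1, S_2, S_3$ of Lemma \ref{lemma:U} (on which Lemma \ref{lemma:pq}'s entropy manipulations rest) in the enlarged setting. Substituting $S_3(n-1)$ recursively into the enlarged encoding functions shows that one still obtains $\overline{X}_i(n)\leftarrow\overline{W}_{4,2},\overline{U}_1^{n-1},\ldots,\overline{U}_4^{n-1}$ for $i\in\{2,3,4\}$, so Lemma \ref{lemma:U} remains valid and the remainder of the proof of Lemma \ref{lemma:pq} carries over unchanged.
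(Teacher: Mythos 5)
The genuine gap is in your achievability argument. Each of your $K-1$ cyclic phases is an interference channel whose coefficient matrix $G_{i,j}=H_{\sigma_r(i),j}$ inherits the reciprocity constraint $H_{i,j}=H_{j,i}$ of the full-duplex model, so that $G_{i,j}=G_{\sigma_r^{-1}(j),\sigma_r(i)}$: distinct entries of the phase-$r$ channel matrix are forced to coincide. The achievable scheme of \cite{cadambe_jafar:Kuserint} requires every channel coefficient to have a continuous probability distribution conditioned on \emph{all} the others, and that hypothesis fails here because each coefficient is deterministic given its reciprocal partner. The paper makes exactly this point at the end of Section \ref{subsec:fdinnerbound}: the reduction to the $K$-user interference channel is valid only when reciprocity is relaxed, and for the actual full-duplex network it substitutes a modified interference alignment construction (Appendix \ref{app:Kuserfd_ach}) in which the beamforming matrices $\mathbf{V}_k$ are generated from auxiliary vectors $\mathbf{w}_k$ with independent entries, precisely so that the construction is independent of both $\mathbf{H}_{k,i}$ and $\mathbf{H}_{i,k}$. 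As written, your proposal invokes \cite{cadambe_jafar:Kuserint} outside its hypotheses and does not establish the lower bound; you would need either to re-verify the alignment and linear-independence conditions under the symmetric channel statistics or to adopt a construction like the paper's.

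Your converse is essentially the paper's: the same per-pair inequality, the same reduction to the four-node $X$ network of Lemma \ref{lemma:pq}, and the same $2K-3$ count when summing over ordered pairs. The paper arranges the cooperation slightly differently --- it first splits each full-duplex node into a half-duplex source and a half-duplex destination (Lemma \ref{lemma:Kusereq}) and then merges sources $2,\ldots,K$ and destinations $1,\ldots,K-1$ into two \emph{separate} $(K-1)$-antenna nodes, so the result literally matches the hypotheses of Lemma \ref{lemma:pq} --- whereas you merge the $K-2$ intermediate physical nodes into a single super-node playing both roles. Your re-verification of $S_1,S_2,S_3$ of Lemma \ref{lemma:U} is the right way to justify that variant, and it does go through: the Fano step for $\overline{W}_{3,1}$ only needs the decoder's observations to be computable from $\overline{W}_{4,1},\overline{W}_{4,2},\overline{U}_1^N,\ldots,\overline{U}_4^N$, which $S_3$ still guarantees. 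You should additionally note, as the paper does, that the resulting $X$ network is not fully connected (the self-links $H_{i,i}=0$ vanish) and that Lemma \ref{lemma:pq} survives this because the only coefficient it needs to be nonzero is $\overline{H}_{4,1}=H_{q,p}$, which is nonzero since $p\neq q$.
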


 The outerbound of the above theorem is proved in Section \ref{subsec:fdouterbound}. A discussion of the innerbound is placed in Section \ref{subsec:fdinnerbound}. A formal proof of the innerbound is placed in Appendix \ref{app:Kuserfd_ach}. The proofs of the both the outerbound and innerbound need the lemma below, which transforms the $K$ user full duplex network to another network whose source nodes are different from destination nodes. 

\begin{lemma}
\label{lemma:Kusereq}
The $K$ user full-duplex network is equivalent to a network with $K$ half-duplex source nodes and $K$ half-duplex destination nodes with the following properties
\begin{enumerate}
\item The input-output relations are described as 
\begin{eqnarray*}
\widetilde{Y}_i(n)&=&\sum_{j=1}^K \widetilde{H}_{i,j}(n)\widetilde{X}_j(n)+\widetilde{Z}_i(n), ~~~~~i\in\{1,2\ldots K\}
\end{eqnarray*}
\item \begin{eqnarray*}\widetilde{H}_{i,j} &=& H_{i,j}, \forall i,j=1,2 \ldots K
\end{eqnarray*}
Note that this means 
\begin{eqnarray*}\widetilde{H}_{i,j} &=& \widetilde{H}_{j,i}\\
\widetilde{H}_{i,i} &=& 0
\end{eqnarray*}
\item There are $K(K-1)$ messages in the system, denoted by $\widetilde{W}_{j,i}, i\neq j$. These messages are denoted by 
\begin{eqnarray*}\widetilde{W}_{j,i} &=& W_{j,i}, \forall i \neq j, i,j \in \{1,2, \ldots K\}\end{eqnarray*}
\item Encoding function of the form 
\begin{equation}
\label{eqn:Kuseraltenc}
\widetilde{X}_i(n) = \widetilde{f}_{i,n}(\widetilde{Y}_i^{n-1},\widetilde{W}_{1,i}, \widetilde{W}_{2,i} \ldots \widetilde{W}_{i-1,i}, \widetilde{W}_{i+1,i} \ldots \widetilde{W}_{K,i})
\end{equation}
\item Decoding function of the form
\begin{equation}
\label{eqn:Kuseraltdec}
\hat{\widetilde{W}}_{j,i} = \widetilde{g}_{j,i}(\widetilde{Y}_j^N,\widetilde{W}_{1,j}, \widetilde{W}_{2,j} \ldots \widetilde{W}_{j-1,j}, \widetilde{W}_{j+1,j} \ldots \widetilde{W}_{K,j}), j \neq i
\end{equation}
\end{enumerate}
\end{lemma}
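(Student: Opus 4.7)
The plan is to establish the equivalence by a direct, signal-by-signal correspondence: I will identify the transmitted symbols, received symbols, noise, messages, encoders and decoders in the two networks so that one mathematical description is obtained from the other by a mere relabeling. Under such an identification the two channels become identical as stochastic input-output systems with identical encoder/decoder information sets, and the capacity regions (hence degrees-of-freedom regions) must coincide.

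First I would set $\widetilde{X}_i(n)=X_i(n)$, $\widetilde{Y}_i(n)=Y_i(n)$, $\widetilde{Z}_i(n)=Z_i(n)$, and $\widetilde{W}_{j,i}=W_{j,i}$ for all indices. Under this identification the input-output equations of the two networks coincide term-by-term, because property (2) of the lemma gives $\widetilde{H}_{i,j}=H_{i,j}$ for every pair, including the diagonal $\widetilde{H}_{i,i}=H_{i,i}=0$. So each destination node in the half-duplex model sees, conditional on the transmitted signals, the same random variable as the corresponding full-duplex node, with the noise having the same distribution.

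The second step is to verify that the information sets available to encoders and decoders match across the two networks. In the full-duplex network the encoder (\ref{eqn:Kuser_enc}) at node $i$ uses the messages $\{W_{j,i}\}_{j\neq i}$ originating at $i$ together with the past output history $Y_i^{n-1}$; because $H_{i,i}=0$, the observation $Y_i(m)$ does not depend on $X_i(m)$, so $Y_i^{n-1}$ is exactly what source $i$ in the half-duplex reformulation is permitted to condition on as $\widetilde{Y}_i^{n-1}$ via its co-located destination. The decoders (\ref{eqn:Kuser_dec}) and (\ref{eqn:Kuseraltdec}) match in the same way: each has access to the received sequence at node $j$ and to the side information consisting of the messages originating at $j$. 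Hence given any scheme $(f_{i,n},g_{j,i})$ on the full-duplex network, the choice $\widetilde{f}_{i,n}=f_{i,n}$ and $\widetilde{g}_{j,i}=g_{j,i}$ is a valid scheme on the half-duplex network with identical joint distribution on $(\widetilde{X}^N,\widetilde{Y}^N,\widehat{\widetilde{W}})$, and the same assignment in reverse converts half-duplex schemes back. Rates and probability of error are preserved under the correspondence, yielding the claimed equivalence.

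I do not anticipate any substantive obstacle: the lemma is essentially a structural observation, and the only non-trivial ingredient that makes the half-duplex reformulation lossless despite drawing source $i$ and destination $i$ as separate nodes is the absence of self-interference $H_{i,i}=0$, which ensures that a node's own past reception is unaffected by its own past transmission and can therefore be furnished to the ``source side'' of the pair at no cost.
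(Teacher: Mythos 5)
Your proof is correct and takes essentially the same route as the paper, whose own argument is just the one-line observation that comparing the encoding equations (\ref{eqn:Kuser_enc}), (\ref{eqn:Kuseraltenc}) and decoding equations (\ref{eqn:Kuser_dec}), (\ref{eqn:Kuseraltdec}) shows that any scheme on one network can be implemented on the other and vice versa; you simply make the symbol-by-symbol identification and the matching of encoder/decoder information sets explicit. The two side-information channels you isolate (past outputs $\widetilde{Y}_i^{n-1}$ furnished to source $i$, and the messages originating at node $j$ furnished to destination $j$) are exactly the ``perfect feedback'' and ``genie'' interpretations the paper records immediately after the lemma statement.
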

Note that the encoding and decoding functions imply that 
\begin{itemize}
\item A genie provides receiver $j$ with apriori knowledge of all messages at source $j$ i.e. $\widetilde{W}_{i,j}, \forall i=\{1,2, \ldots K\} - \{j\}$
\item There is perfect feedback from destination $K$ to source $K$. 
\end{itemize}
By comparing encoding equations (\ref{eqn:Kuser_enc}), (\ref{eqn:Kuseraltenc}) and decoding equations (\ref{eqn:Kuser_dec}), (\ref{eqn:Kuseraltdec}), the lemma can easily be proved, i.e., we can show that any encoding scheme that can be implemented on the $K$ user full duplex network, can also be implemented on network described in the above lemma and vice-versa.

\begin{figure}[!tbp]
\begin{center}\input{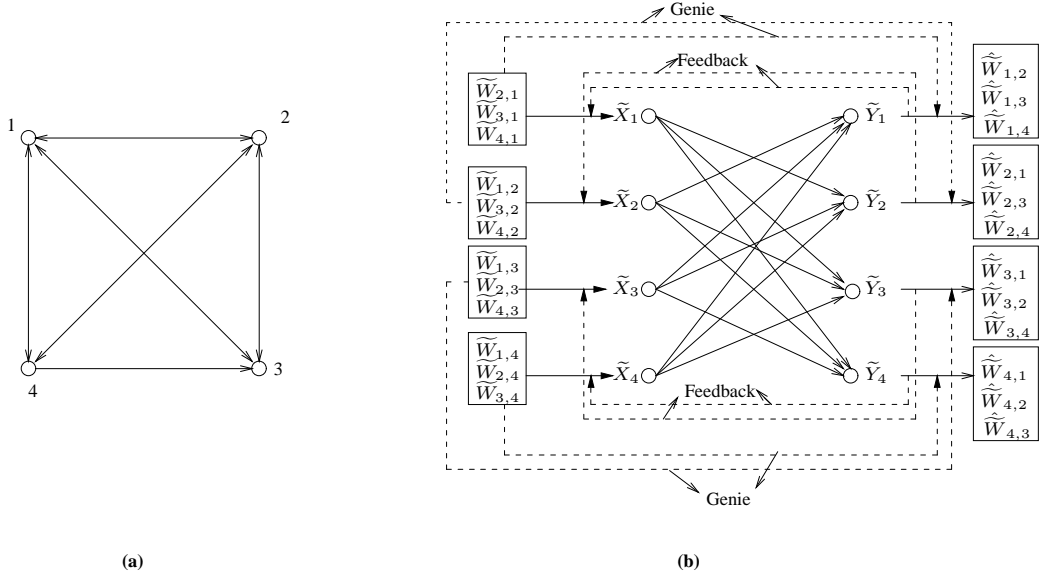}\end{center}
\caption{(a) represents the $4$ user full duplex network. (b) represent an equivalent network}
\label{fig:Kuserfd}
\end{figure}

\begin{figure}[!tbp]
\begin{center}\setlength{\unitlength}{0.00050000in}
\begingroup\makeatletter\ifx\SetFigFont\undefined%
\gdef\SetFigFont#1#2#3#4#5{%
  \reset@font\fontsize{#1}{#2pt}%
  \fontfamily{#3}\fontseries{#4}\fontshape{#5}%
  \selectfont}%
\fi\endgroup%
{\renewcommand{\dashlinestretch}{30}
\begin{picture}(6012,4533)(0,-10)
\put(75,3237){\makebox(0,0)[lb]{{\SetFigFont{7}{8.4}{\rmdefault}{\mddefault}{\updefault}$\overline{W}_{3,1}$}}}
\put(75,3012){\makebox(0,0)[lb]{{\SetFigFont{7}{8.4}{\rmdefault}{\mddefault}{\updefault}$\overline{W}_{4,1}$}}}
\path(75,3537)(600,3537)(600,3012)
	(75,3012)(75,3537)
\put(1425,3162){\makebox(0,0)[lb]{{\SetFigFont{7}{8.4}{\rmdefault}{\mddefault}{\updefault}$\overline{X}_1$}}}
\put(5475,237){\makebox(0,0)[lb]{{\SetFigFont{7}{8.4}{\rmdefault}{\mddefault}{\updefault}$\hat{\overline{W}}_{4,2}$}}}
\put(5475,537){\makebox(0,0)[lb]{{\SetFigFont{7}{8.4}{\rmdefault}{\mddefault}{\updefault}$\hat{\overline{W}}_{4,1}$}}}
\put(1853,2315){\ellipse{150}{150}}
\put(3975,3237){\ellipse{150}{150}}
\put(1897,3215){\ellipse{150}{150}}
\put(1875,1437){\ellipse{150}{150}}
\put(1875,537){\ellipse{150}{150}}
\put(3975,537){\ellipse{150}{150}}
\put(3975,1437){\ellipse{150}{150}}
\put(3975,2337){\ellipse{150}{150}}
\path(1950,3237)(3900,2412)
\path(3777.795,2431.128)(3900.000,2412.000)(3801.173,2486.386)
\path(1950,2337)(3900,3237)
\path(3803.617,3159.474)(3900.000,3237.000)(3778.473,3213.952)
\path(1950,1437)(3900,3162)
\path(3829.998,3060.021)(3900.000,3162.000)(3790.243,3104.961)
\path(1950,612)(3975,3162)
\path(3923.867,3049.370)(3975.000,3162.000)(3876.881,3086.683)
\path(1950,3162)(3900,612)
\path(3803.275,689.099)(3900.000,612.000)(3850.937,725.546)
\path(1950,2262)(3900,537)
\path(3790.243,594.039)(3900.000,537.000)(3829.998,638.979)
\path(1950,1362)(3900,462)
\path(3778.473,485.048)(3900.000,462.000)(3803.617,539.526)
\path(1950,462)(3900,1362)
\path(3803.617,1284.474)(3900.000,1362.000)(3778.473,1338.952)
\path(1950,2262)(3900,1437)
\path(3777.795,1456.128)(3900.000,1437.000)(3801.173,1511.386)
\path(1950,1362)(3900,2337)
\path(3806.085,2256.502)(3900.000,2337.000)(3779.252,2310.167)
\path(1950,537)(3900,2262)
\path(3829.998,2160.021)(3900.000,2262.000)(3790.243,2204.961)
\path(1725,2487)(2025,2487)(2025,312)
	(1725,312)(1725,2487)
\path(1950,3162)(3900,1512)
\path(3789.015,1566.611)(3900.000,1512.000)(3827.772,1612.415)
\path(3825,3462)(4125,3462)(4125,1287)
	(3825,1287)(3825,3462)
\dashline{60.000}(4650,4287)(4650,2112)
\blacken\path(4620.000,2232.000)(4650.000,2112.000)(4680.000,2232.000)(4620.000,2232.000)
\path(525,1437)(1275,1437)
\blacken\path(1155.000,1407.000)(1275.000,1437.000)(1155.000,1467.000)(1155.000,1407.000)
\path(600,3237)(1350,3237)
\blacken\path(1230.000,3207.000)(1350.000,3237.000)(1230.000,3267.000)(1230.000,3207.000)
\path(4500,2112)(5550,2112)
\path(5430.000,2082.000)(5550.000,2112.000)(5430.000,2142.000)
\path(4425,537)(5475,537)
\path(5355.000,507.000)(5475.000,537.000)(5355.000,567.000)
\path(5475,837)(6000,837)(6000,237)
	(5475,237)(5475,837)
\dashline{60.000}(1050,3237)(1050,1437)
\blacken\path(1020.000,1557.000)(1050.000,1437.000)(1080.000,1557.000)(1020.000,1557.000)
\dashline{60.000}(4725,462)(4725,12)(900,12)(900,1437)
\blacken\path(930.000,1317.000)(900.000,1437.000)(870.000,1317.000)(930.000,1317.000)
\dashline{60.000}(5025,2112)(5025,3687)(1050,3687)(1050,3237)
\blacken\path(1020.000,3357.000)(1050.000,3237.000)(1080.000,3357.000)(1020.000,3357.000)
\dashline{60.000}(900,1437)(900,3237)
\blacken\path(930.000,3117.000)(900.000,3237.000)(870.000,3117.000)(930.000,3117.000)
\put(2400,87){\makebox(0,0)[lb]{{\SetFigFont{7}{8.4}{\rmdefault}{\mddefault}{\updefault}Feedback}}}
\put(2550,3762){\makebox(0,0)[lb]{{\SetFigFont{7}{8.4}{\rmdefault}{\mddefault}{\updefault}Feedback}}}
\put(4575,4362){\makebox(0,0)[lb]{{\SetFigFont{7}{8.4}{\rmdefault}{\mddefault}{\updefault}$\overline{W}_{4,2}$}}}
\put(4725,3987){\makebox(0,0)[lb]{{\SetFigFont{7}{8.4}{\rmdefault}{\mddefault}{\updefault}Genie}}}
\put(1350,1362){\makebox(0,0)[lb]{{\SetFigFont{7}{8.4}{\rmdefault}{\mddefault}{\updefault}$\overline{X}_2$}}}
\put(0,1362){\makebox(0,0)[lb]{{\SetFigFont{7}{8.4}{\rmdefault}{\mddefault}{\updefault}$\overline{W}_{4,2}$}}}
\put(4200,2037){\makebox(0,0)[lb]{{\SetFigFont{7}{8.4}{\rmdefault}{\mddefault}{\updefault}$\overline{Y}_3$}}}
\put(4125,462){\makebox(0,0)[lb]{{\SetFigFont{7}{8.4}{\rmdefault}{\mddefault}{\updefault}$\overline{Y}_4$}}}
\put(5625,1962){\makebox(0,0)[lb]{{\SetFigFont{7}{8.4}{\rmdefault}{\mddefault}{\updefault}$\hat{\overline{W}}_{3,1}$}}}
\end{picture}
}\end{center}
\caption{$4$ node network with sum capacity not smaller than the sum capacity of the $4$ user full duplex network}
\label{fig:Kusertrans}
\end{figure}

We now proceed to prove the outerbound of Theorem \ref{thm:Kuserfd}.

\subsection{Proof of Outer-bound of Theorem \ref{thm:Kuserfd}}
\label{subsec:fdouterbound}
Note that the outerbound of Theorem \ref{thm:Kuserfd} is equivalent to the following statement
$$d_{fd} \leq \frac{K(K-1)}{2K-3}$$ 
where $d_{fd}$ represents the number of degrees of freedom of the $K$ user full duplex network.

If $\mathcal{D}^{[K]}_{fd}$ is the degrees of freedom region of the $K$ user full duplex network, we show that 
$$ \sum_{j\in \{1,2,3, \ldots p-1,p+1, \ldots K\} } d_{j,p} + \sum_{i\in \{1,2,3, \ldots q-1,q+1, \ldots K\}} d_{q,i}  -d_{q,p}  \leq 1, \forall p \neq q$$
for all $\big( d_{i,j}\big) \in \mathcal{D}_{fd}^{[K]}  $.
Summing inequalities of the above form over all $(p,q), p \neq q$ gives the desired outerbound. 
It is enough to show the inequality for $p=1$ and $q=K$. The inequality extends to all other values of $(p,q)$ by symmetry. Therefore, we intend to show 
$$ \sum_{j\in \{2,3, \ldots K\} } d_{j,1} + \sum_{i\in \{1,2,3, \ldots K-1\}} d_{i,K} - d_{K,1}  \leq 1.$$
The above inequality is shown for the equivalent network described in Lemma \ref{lemma:Kusereq}. 
To show the above inequality, we first set $\widetilde{W}_{i,j} = \phi, (i-K)(j-1) \neq 0$.  With these messages set to null, there are no messages intended for destination node $1$ and therefore, it can only help the capacity of the network through feedback of the received symbol to node $1$. Therefore, we can delete side information of messages $\widetilde{W}_{j,1}, j=2,3 \ldots K$ at destination $1$ without affecting the converse argument. We now allow destination nodes $1 \ldots \ldots K-1$ to co-operate with each other, so that they form a single multi-antenna destination node.  Similarly, we allow source nodes $2 \ldots K$ to cooperate with each other so that they form a single multi-antenna source node. Note that allowing for co-operation does not reduce capacity and therefore can be used in the converse argument. The network is thus transformed to a $4$ node $X$ network (Figure \ref{fig:Kusertrans}) with input-output relations described by 
\begin{eqnarray}\label{eqn:KuserXtrans}
\overline{Y}_i(n)&=&\sum_{j=1}^2{\overline{H}}_{i,j}(n)\overline{X}_j(n)+\overline{Z}_i(n), ~~~~~i\in\{3,4\}
\end{eqnarray}
where 
\begin{eqnarray}
\overline{Y}_2(n) & = & \left[\widetilde{Y}_2(n)~~\cdots ~~\widetilde{Y}_{K-1}(n)\right]^T\\
\overline{Y}_4(n) & = & \widetilde{Y}_{K}(n)
\end{eqnarray}
Nodes $2$ and $3$ act as multiple antenna nodes, each with $K-1$ antennas. $\overline{X}_i(n), \overline{Z}_j(n)$ are also defined in a corresponding manner for $i=1,2, j=3,4$. The definition of the channel coefficients $\overline{H}_{i,j}(n)$ is clear from equations (\ref{eqn:KuserXtrans}) and (\ref{eq:def_fd}), and from Figures \ref{fig:Kuserfd}, \ref{fig:Kusertrans} and \ref{fig:4userX}. Note that $\overline{H}_{4,1} = H_{K,1} \neq 0$ since we have $p \neq q$.
The messages in this $4$ node $X$ network are defined as follows
\begin{eqnarray}
\overline{W}_{3,1}&=& \left[ \widetilde{W}_{2,1} ~~\widetilde{W}_{3,1} ~~\cdots~~\widetilde{W}_{K-1,1}\right]\\
\overline{W}_{3,2}&=& \phi\\
\overline{W}_{4,1} & = &  \widetilde{W}_{K,1}\\
\overline{W}_{4,2} & = & \left[\widetilde{W}_{K,2} ~~ \widetilde{W}_{q,3} ~~ \cdots ~~ \widetilde{W}_{K,K-1}\right]
\end{eqnarray}

The encoding and decoding functions, for codewords of length $K$ over this $4$ node $X$ network are defined as
\begin{eqnarray*}
\overline{X}_1(n) &=& \overline{f}_{1,n}(\overline{W}_{3,1},\overline{W}_{4,1},\overline{Y}_3^{n-1}, \overline{Y}_4^{n-1})\\
\overline{X}_2(n) &=& \overline{f}_{2,n}(\overline{W}_{4,2},\overline{Y}_3^{n-1}, \overline{Y}_4^{n-1})\\
\hat{\overline{W}}_{4,i} &=& \overline{g}_{4,i}(\overline{Y}_4^{N}), i=1,2\\
\hat{\overline{W}}_{3,1} &=& \overline{g}_{3,1}(\overline{Y}_3^{N}, \overline{W}_{4,2})
\end{eqnarray*}

We allow multi-antenna node $3$ to have apriori knowledge of message $\overline{W}_{3,1}$ through a genie. We also allow perfect feedback from destination nodes $3,4$ to source nodes $1,2$.  Note that the side information through feedback and genie in the $4$ user network constructed is stronger than the information at the corresponding nodes in the original $K$ user network. Since we are only providing an outerbound on the degrees of freedom region, the argument is not affected. Now, over this network, we claim that Lemma \ref{lemma:pq} holds. The $4$ node $X$ network differs from the network of Lemma \ref{lemma:pq} in two aspects: 
\begin{enumerate}
\item In the $X$ network constructed in this section, node $3$ has information of message $W_{4,2}$ apriori. In the network of Lemma \ref{lemma:pq}, node $3$ does not have this side information. 
\item The network constructed here is not fully connected since certain channel co-efficients are equal to zero. However $\overline{H}_{4,1}$ is non-zero.
\end{enumerate}
1) does not affect the converse argument because the proof of Lemma \ref{lemma:pq} begins with the genie providing information of $W_{4,2}$ (and $W_{4,1}$) to node $3$ (see figure \ref{fig:4userXconverse}). 2) does not affect the converse argument because, as mentioned towards the end of section \ref{subsec:main}, the proof of Lemma \ref{lemma:pq} holds as long as $ \overline{H}_{4,1} \neq 0$. Therefore, the bound of Lemma \ref{lemma:pq} holds for the $4$ user network in consideration here i.e. the network defined by equations (\ref{eqn:KuserXtrans}) and we can write
$$ \overline{d}_{3,1} + \overline{d}_{4,1} + \overline{d}_{4,2} \leq 1$$
$$ \Rightarrow \sum_{i=2}^{K}d_{i,1}+\sum_{j=1}^{K-1}d_{K,j}-d_{K,1} \leq 1$$
where $\overline{d}_{i,j}$ represents the number of degrees of freedom corresponding to message $W_{i,j}$. The desired result follows from the final equation above.
\QED

\subsection{Inner-bound of Theorem \ref{thm:Kuserfd}}
\label{subsec:fdinnerbound}
\begin{figure}[!tbp]
\begin{center}\setlength{\unitlength}{0.00050000in}
\begingroup\makeatletter\ifx\SetFigFont\undefined%
\gdef\SetFigFont#1#2#3#4#5{%
  \reset@font\fontsize{#1}{#2pt}%
  \fontfamily{#3}\fontseries{#4}\fontshape{#5}%
  \selectfont}%
\fi\endgroup%
{\renewcommand{\dashlinestretch}{30}
\begin{picture}(5025,2931)(0,-10)
\put(1575,2833){\ellipse{150}{150}}
\put(1575,1033){\ellipse{150}{150}}
\put(1575,133){\ellipse{150}{150}}
\put(3675,2833){\ellipse{150}{150}}
\put(3675,1933){\ellipse{150}{150}}
\put(3697,1011){\ellipse{150}{150}}
\put(3675,133){\ellipse{150}{150}}
\put(1575,1933){\ellipse{150}{150}}
\path(600,2833)(1125,2833)
\blacken\path(1005.000,2803.000)(1125.000,2833.000)(1005.000,2863.000)(1005.000,2803.000)
\path(600,1933)(1125,1933)
\blacken\path(1005.000,1903.000)(1125.000,1933.000)(1005.000,1963.000)(1005.000,1903.000)
\path(600,1033)(1125,1033)
\blacken\path(1005.000,1003.000)(1125.000,1033.000)(1005.000,1063.000)(1005.000,1003.000)
\path(600,133)(1125,133)
\blacken\path(1005.000,103.000)(1125.000,133.000)(1005.000,163.000)(1005.000,103.000)
\path(4200,2833)(4950,2833)
\path(4830.000,2803.000)(4950.000,2833.000)(4830.000,2863.000)
\path(4200,1033)(4950,1033)
\path(4830.000,1003.000)(4950.000,1033.000)(4830.000,1063.000)
\path(4200,133)(4950,133)
\path(4830.000,103.000)(4950.000,133.000)(4830.000,163.000)
\path(1650,2833)(3600,2008)
\path(3477.795,2027.128)(3600.000,2008.000)(3501.173,2082.386)
\path(1650,1933)(3600,1033)
\path(3478.473,1056.048)(3600.000,1033.000)(3503.617,1110.526)
\path(1650,1033)(3600,1933)
\path(3503.617,1855.474)(3600.000,1933.000)(3478.473,1909.952)
\path(1650,2833)(3600,1108)
\path(3490.243,1165.039)(3600.000,1108.000)(3529.998,1209.979)
\path(1650,133)(3600,958)
\path(3501.173,883.614)(3600.000,958.000)(3477.795,938.872)
\path(1650,133)(3600,1858)
\path(3529.998,1756.021)(3600.000,1858.000)(3490.243,1800.961)
\path(1650,1933)(3600,2833)
\path(3503.617,2755.474)(3600.000,2833.000)(3478.473,2809.952)
\path(1650,1033)(3600,2758)
\path(3529.998,2656.021)(3600.000,2758.000)(3490.243,2700.961)
\path(1650,1033)(3600,133)
\path(3478.473,156.048)(3600.000,133.000)(3503.617,210.526)
\path(1650,1933)(3600,208)
\path(3490.243,265.039)(3600.000,208.000)(3529.998,309.979)
\path(1650,2833)(3675,208)
\path(3577.950,284.690)(3675.000,208.000)(3625.457,321.338)
\path(1650,133)(3675,2758)
\path(3625.457,2644.662)(3675.000,2758.000)(3577.950,2681.310)
\path(4200,1933)(4950,1933)
\path(4830.000,1903.000)(4950.000,1933.000)(4830.000,1963.000)
\put(1200,2758){\makebox(0,0)[lb]{{\SetFigFont{7}{8.4}{\rmdefault}{\mddefault}{\updefault}$\widetilde{X}_1$}}}
\put(1200,1858){\makebox(0,0)[lb]{{\SetFigFont{7}{8.4}{\rmdefault}{\mddefault}{\updefault}$\widetilde{X}_2$}}}
\put(1200,958){\makebox(0,0)[lb]{{\SetFigFont{7}{8.4}{\rmdefault}{\mddefault}{\updefault}$\widetilde{X}_3$}}}
\put(3825,2758){\makebox(0,0)[lb]{{\SetFigFont{7}{8.4}{\rmdefault}{\mddefault}{\updefault}$\widetilde{Y}_1$}}}
\put(3825,1858){\makebox(0,0)[lb]{{\SetFigFont{7}{8.4}{\rmdefault}{\mddefault}{\updefault}$\widetilde{Y}_2$}}}
\put(3825,958){\makebox(0,0)[lb]{{\SetFigFont{7}{8.4}{\rmdefault}{\mddefault}{\updefault}$\widetilde{Y}_3$}}}
\put(3825,58){\makebox(0,0)[lb]{{\SetFigFont{7}{8.4}{\rmdefault}{\mddefault}{\updefault}$\widetilde{Y}_4$}}}
\put(5025,58){\makebox(0,0)[lb]{{\SetFigFont{7}{8.4}{\rmdefault}{\mddefault}{\updefault}$\hat{\widetilde{W}}_{4,3}$}}}
\put(5025,958){\makebox(0,0)[lb]{{\SetFigFont{7}{8.4}{\rmdefault}{\mddefault}{\updefault}$\hat{\widetilde{W}}_{3,2}$}}}
\put(5025,1858){\makebox(0,0)[lb]{{\SetFigFont{7}{8.4}{\rmdefault}{\mddefault}{\updefault}$\hat{\widetilde{W}}_{2,1}$}}}
\put(0,1933){\makebox(0,0)[lb]{{\SetFigFont{7}{8.4}{\rmdefault}{\mddefault}{\updefault}$\widetilde{W}_{3,2}$}}}
\put(0,58){\makebox(0,0)[lb]{{\SetFigFont{7}{8.4}{\rmdefault}{\mddefault}{\updefault}$\widetilde{W}_{1,4}$}}}
\put(0,958){\makebox(0,0)[lb]{{\SetFigFont{7}{8.4}{\rmdefault}{\mddefault}{\updefault}$\widetilde{W}_{4,3}$}}}
\put(75,2758){\makebox(0,0)[lb]{{\SetFigFont{7}{8.4}{\rmdefault}{\mddefault}{\updefault}$\widetilde{W}_{2,1}$}}}
\put(5025,2758){\makebox(0,0)[lb]{{\SetFigFont{7}{8.4}{\rmdefault}{\mddefault}{\updefault}$\hat{\widetilde{W}}_{1,4}$}}}
\put(1200,58){\makebox(0,0)[lb]{{\SetFigFont{7}{8.4}{\rmdefault}{\mddefault}{\updefault}$\widetilde{X}_4$}}}
\end{picture}
}\end{center}
\caption{Innerbound for the $4$ user full duplex channel - conversion to interference channel with $4$ messages}
\label{fig:Kuserachieve}
\end{figure}

Note that the innerbound of theorem \ref{thm:Kuserfd} is equivalent to 
$$d_{fd} \geq \frac{K(K-1)}{2K-2} = \frac{K}{2}$$ 
where $d_{fd}$ represents the number of degrees of freedom of the $K$ user full duplex network.
The innerbound is shown using an achievable scheme based on interference alignment over multiple symbol extension of the channel, much like \cite{cadambe_jafar:dofx}. 
The argument is placed in appendix \ref{app:Kuserfd_ach}. However, here we consider the special case of the $K$ user full duplex channel where the reciprocity constraint on the channel is relaxed, i.e. the case where $H_{i,j}$ is independent of $H_{j,i}$. In this case, the innerbound can be shown using the fact that the $K$ user interference channel has $K/2$ degrees of freedom. To see this, over the equivalent $K$ user full duplex channel of Lemma \ref{lemma:Kusereq} we set $\widetilde{W}_{i,j} = \phi, \forall (i-j) \equiv l (\mbox{mod } K), \forall l \neq 1$. In other words, the only messages that are not set to null are $\widetilde{W}_{1,2}, \widetilde{W}_{2,3}, \ldots \widetilde{W}_{K-1,K}, \widetilde{W}_{K,1}$ (See Figure \ref{fig:Kuserachieve}). Note that with these messages, the channel is an interference channel with certain \emph{interfering} channels ($\widetilde{H}_{i,i})$ set to zero. Since interference can only hurt the capacity of a user, setting certain interference channels to zero can only increase the capacity. Therefore, the equivalent $K$ user full duplex network performs atleast as well as the $K$ user interference channel and therefore has at least $K/2$ degrees of freedom. Note that this argument does not hold if $\widetilde{H}_{i,j} = \widetilde{H}_{j,i}$ since the achievable scheme over the interference channel in \cite{cadambe_jafar:Kuserint} requires each channel co-efficient to have a continuous probability distribution, given \emph{all} other channel co-efficients. (If the channels satisfy the reciprocity property, then a particular channel co-efficient is deterministic if all other channel co-efficients are given). However, we show that the innerbound is still valid and the achievable scheme is presented in Appendix \ref{app:Kuserfd_ach}.

\section{Conclusion}
We characterize the capacity of a fully connected network with $S$ source nodes, $R$ relays and $D$ destination nodes with full duplex operation and feedback. We also provide bounds on capacity approximations within $o(\log(\mbox{SNR}))$ of the $K$ user fully connected network in which there is a message from every node to every other node. The lower and upper bound provided are tight if $K$ is large. Apart from the small gap between the bounds of the $K$ user fully connected network, this work effectively solves the degrees of freedom problem for a fairly large class of wireless networks with time-varying/frequency-selective channel gains. 

A major implication of our result is that in fully connected networks whose source nodes are different from destination nodes, the techniques of relays, perfect feedback to source nodes, noisy co-operation and full duplex operation can only increase the capacity upto $o(\log(\mbox{SNR}))$ bits. In such networks, the search for improvements in capacity of the order of $\log(\mbox{SNR})$ ends in interference alignment. In other words, the mentioned techniques (relays etc.) cannot improve the \emph{degrees of freedom} of wireless networks except under certain special circumstances listed below. 
\begin{enumerate}
\item Relays can increase the degrees of freedom of networks that are not fully connected
\item Feedback can increase the degrees of freedom of a network, if it is provided to a decoding node. However, feedback to source/relay nodes does not increase the degrees of freedom of wireless networks.
\item Full duplex operation can increase the degrees of freedom of wireless networks if certain source nodes also behave as destination nodes i.e. the set of source nodes and the set of destination nodes are \emph{not} disjoint.
\item Cognitive co-operation can increase the degrees of freedom of wireless networks if we do not account for the cost of a cognition (for example, the cost of acquiring a message) at a node.
\end{enumerate}
It must be observed that our result does not necessarily discourage the use of relays, feedback, co-operation and full duplex operation in real communication scenarios. In fact, not all wireless networks are fully connected under realistic transmit signal powers. Furthermore, in several communication networks, source nodes act as destination nodes as well. Importantly, the result of this paper provides an insight into the type of networks which are most likely to benefit from these techniques, especially at high SNR. Furthermore, our result does not preclude huge benefits in terms of capacity at low or mid-range SNR from the mentioned techniques which continues to be an important area of research in wireless communications. Another important limitation of our results is the assumption of time-varying and/or frequency selective channel gains for the achievability schemes based on interference alignment. However, the outerbounds of Theorem \ref{thm:main} and Theorem \ref{thm:Kuserfd} are fairly general and hold for all fully connected networks whether the channel coefficients are time-varying or constant.

\appendices
\section{Proof of Lemma \ref{lemma:U}}
\label{proof:lemmaU}
We use induction principle to prove Lemma \ref{lemma:U}. First, we argue that $S_1(1), S_2(1), S_3(1)$ are true. 
At the first channel use ($n=1$), because there are no prior received signals, the destination nodes $3,4$ have no useful information to transmit. Therefore, without loss of generality $\overline{X}_3(1), \overline{X}_4(1) = 0$. 
\begin{eqnarray}
\overline{X}_1(1) &=& \overline{f}_{1,1}\left(\overline{W}_{3,1},\overline{W}_{4,1}\right)\label{eq:ind11}\\
\overline{X}_2(1) &=& \overline{f}_{2,1}\left(\overline{W}_{4,2}\right)\label{eq:ind12}\\
\overline{Y}_i(1)&=& \overline{U}_i(1) + \overline{H}_{i,2}(1)\overline{X}_2(1), ~~i\in\{1,2,3,4\}\label{eq:ind13}
\end{eqnarray}
Equations (\ref{eq:ind11}), (\ref{eq:ind12}) imply that the statements $S_1(1), S_2(1)$ are true. (\ref{eq:ind12}), together with (\ref{eq:ind13}) implies that $S_3(1)$ is true as well. Thus, the result of Lemma \ref{lemma:U} holds for $n=1$. For $n=2$,
\begin{eqnarray}
\overline{X}_1(2) &=& \overline{f}_{1,2}\left(\overline{W}_{3,1},\overline{W}_{4,1}, \overline{Y}_1(1), \overline{Y}_2(1), \overline{Y}_3(1), \overline{Y}_4(1)\right)\label{eq:ind21}\\
\overline{X}_2(2) &=& \overline{f}_{2,2}\left(\overline{W}_{4,2}, \overline{Y}_1(1), \overline{Y}_2(1), \overline{Y}_3(1), \overline{Y}_4(1)\right)\label{eq:ind22a}\\
\overline{X}_j(2) &=&  \overline{f}_{j,2}\left(\overline{Y}_j(1)\right), ~~j\in\{3,4\}\label{eq:ind22}\\
\overline{Y}_i(2)&=& \overline{U}_i(2) + \sum_{j=2}^4\overline{H}_{i,j}(2)\overline{X}_j(2), ~~i\in\{1,2,3,4\}\label{eq:ind23}
\end{eqnarray}
Equation (\ref{eq:ind21}) and $S_1(1)$ imply that $S_1(2)$ is true. (\ref{eq:ind22a}), (\ref{eq:ind22}) and $S_2(1)$ imply that $S_2(2)$ is true. $S_1(2),S_2(2)$ and (\ref{eq:ind23}) together imply that $S_3(2)$ is true as well.
Thus, $S_i(n),i=1,2,3$ hold true for $n=2$. Following the induction argument, suppose that $S_i(n),i=1,2,3$ are true for $n=k-1$. Then,
\begin{eqnarray*}
\overline{X}_1(k) &=& \overline{f}_{1,k}\left(\overline{W}_{3,1},\overline{W}_{4,1},\overline{Y}_1^{k-1}, \overline{Y}_2^{k-1}, \overline{Y}_3^{k-1}, \overline{Y}_4^{k-1}\right)\\
\overline{X}_2(k) &=& \overline{f}_{2,k}\left(\overline{W}_{4,2}, \overline{Y}_1^{k-1}, \overline{Y}_2^{k-1}, \overline{Y}_3^{k-1}, \overline{Y}_4^{k-1}\right)\\
\overline{X}_j(k) &=&  \overline{f}_{j,2}\left(\overline{Y}_j^{k-1}\right), ~~j\in\{3,4\}\\
\overline{Y}_i(k)&=& \overline{U}_i(k) + \sum_{j=2}^4\overline{H}_{i,j}(k)\overline{X}_j(k), ~~i\in\{1,2,3,4\}
\end{eqnarray*}
The above equations, along with $S_1(k-1), S_2(k-1), S_3(k-1)$ imply that $S_1(k), S_2(k), S_3(k)$ are true. The inductive assumption implies that $S_i(n),i=1,2,3$ hold for all values of $n$.

\section{Proof of innerbound of Theorem \ref{thm:Kuserfd} : Achievable scheme}
\label{app:Kuserfd_ach}

The achievability proof is based on interference alignment over the channel described in lemma \ref{lemma:Kusereq} (Figure \ref{fig:Kuserfd}(b)). Since many of the details are identical to \cite{cadambe_jafar:dofx}, we focus here on the unique aspects of this proof.

Let $\Gamma = (K-1)(K-2)$. We show that $K(K-1) n ^\Gamma$ degrees of freedom are achievable over a $$ \mu_n  = (K-1) \left( (n+1)^\Gamma + n^\Gamma\right)$$ symbol extension of the channel for any $n \in \mathbb{N}$ thus implying the desired result. 
 Over the extended channel, the scheme achieves $n^{\Gamma}$ degrees of freedom for each of the $K(K-1)$ messages $\widetilde{W}_{i,j}, j\neq i$.
The signal vector in the extended channel at the $j^{th}$ user's receiver can be expressed as 
$$ \xY_{j}(\kappa) = \displaystyle\sum_{i=1}^{M} \xH_{j,i}(\kappa) \mathbf{X}_{i}(\kappa) + \xZ_{j}$$ 
where $\mathbf{X}_{i}$ is a $\mu_n \times 1$ column vector representing the $\mu_n$ symbol extension of the transmitted symbol $X_{i}$, i.e 
$$\xX_{i}(\kappa) \define \left[ \begin{array}{c} \widetilde{X}_{i}(\mu_n\kappa+1)\\ \widetilde{X}_{i}(\mu_n \kappa + 2)\\ \vdots \\ \widetilde{X}_{i}(\mu_n (\kappa+1)) \end{array}\right]$$
Similarly $\xY_{i}$ and $\xZ_{i}$ represent $\mu_n$ symbol extensions of the $\widetilde{Y}_{i}$ and $\widetilde{Z}_{i}$ respectively. 
$\xH_{i,j}$ is a diagonal $\mu_n\times \mu_n$ matrix representing the $\mu_n$ symbol extension of the channel.
Similar to the interference alignment based achievable schemes of the interference and $X$ channels, the message $\widetilde{W}_{i,j}$ is encoded at transmitter $j$ as $n^\Gamma$ independent streams so that $\xX_j$ is 
$$\xX_{j}(\kappa) = \displaystyle\sum_{i=\{1,2\ldots K\} - \{j\} }\displaystyle\sum_{m=1}^{(n+1)^\Gamma} x_{i,j}^{[m]}(\kappa) \mathbf{v}^{[m]}_{i,j}( \kappa ) = \displaystyle\sum_{i=\{1,2, \ldots K\} - \{j\}}\xV_{i,j}(\kappa) \mathbf{x}_{i,j}(\kappa)$$
The received signal at the $k^{th}$ receiver can then be written as
$$ \xY_{k}(\kappa) = \displaystyle\sum_{i=1}^{M}\xH_{k,i}(\kappa) \big( \displaystyle\sum_{j=1}^{N}\xV_{j,i}(\kappa) \mathbf{x}_{j,i}(\kappa) \big) + \xZ_{k}(\kappa)$$

We now need to ensure that at receiver $j$, the $(K-1)(K-2)$ interfering spaces $\xV_{k,i},k \neq i, k \neq j, i \neq j$ lie in a $(K-1)(n+1)^\Gamma$ dimensional space so that $(K-1) n^\Gamma$ desired spaces $\xV_{j,i}, i \in \{1,2 \ldots \} - \{j\}$ can be decoded free of interference from a $\mu_n$ dimensional space. To do this, we first set 
$$\xV_{j,i} = \xV_{j}, \forall i \neq j $$ 
Then, we design $\xV_{j}, j=1,2 \ldots K$ so that they satisfy the interference alignment equations below.
\begin{equation} \label{eqn:intalign} \xH_{i,j} \xV_{k} \prec \xI_{k}, \forall \{(i,j,k): i \neq k, k \neq j, j \neq i\} \end{equation} 
such that $\mbox{rank}(\xI_{k}) = (n+1)^\Gamma$ where $\mathbf{P} \prec \mathbf{Q}$ implies that the span of the column vectors of $\mathbf{P}$ lies in the vector space spanned by the column vectors of $\mathbf{Q}$.. Note that for a fixed $k$, there are $\Gamma=(K-1)(K-2)$ relations of the above form.
We first generate $\mu_n \times 1$ column vectors $\xw_{k}, k=1,2\ldots K$ so that all the entries of $\xw_{k}$ are drawn from any continuous distribution independently from each other and independently from all other entries in $\xw_{l}, l \neq k$. The rest of the proof is similar to the achievable scheme for the $X$ channel presented in \cite{cadambe_jafar:dofx}. It is easy to observe that the dimension of the interfering space at receiver $k$ space is equal to the dimension of the space spanned by all column vectors of matrices $\xI_j,j\neq k$ which is equal to $(K-1)(n+1)^\Gamma$. The only difference from the model in \cite{cadambe_jafar:dofx} is that here, we have $\xH_{i,j}= \xH_{j,i}$ whereas, in \cite{cadambe_jafar:dofx} the matrix $\xH_{j,i}$ is independent from $\xH_{i,j}$. However this difference does not affect the construction of vectors satisfying the desired  interference alignment relations (\ref{eqn:intalign}). The difference does not affect the argument that at any receiver, the signal space is linearly independent with the interference space since the argument only depends on $\xw_{k}$ being independent of $\xw_{l}$ for $l \neq k$. The only condition that needs to be verified is that all the desired streams of at receiver $k$ are linearly independent of each other. In other words, all that needs to be shown is that the column vectors of 
\begin{eqnarray*} \mathbf{D}_k &=& \left[ \xH_{k,1} \xV_{k,1} ~~ \xH_{k,2} \xV_{k,2}~~\ldots~~\xH_{k,k-1} \xV_{k,k-1}~~\xH_{k,k+1}\xV_{k,k+1}~~ \xH_{k,K} \xV_{k,K}\right] \\
&=& \left[ \xH_{k,1} \xV_{k} ~~ \xH_{k,2} \xV_{k}~~\ldots~~\xH_{k} \xV_{k}~~\xH_{k,k+1}\xV_{k}~~ \xH_{k,K} \xV_{k}\right] 
\end{eqnarray*}
are linearly independent.  The linear independence follows from the fact that the construction of $\xV_{k}$ satisfying the relations of (\ref{eqn:intalign}) is independent of both $\xH_{k,i}$ and $\xH_{i,k}$ for $i \neq k$. Again, the reader is referred to the achievable scheme in \cite{cadambe_jafar:dofx} for a formal proof of the same.
\bibliographystyle{ieeetr}
\bibliography{refs}

\end{document}